\newtheorem{thm}{Theorem}
\newtheorem{lem}[thm]{Lemma}
\newtheorem{cor}[thm]{Corollary}
\newcommand{\Po}{\mathcal{P}}
\newcommand{\Go}{\mathcal{G}}
\newcommand{\Ho}{\mathcal{H}}
\newcommand{\eps}{\varepsilon}
\newcommand{\mc}[1]{\mathcal{#1}} 
\newcommand{\bb}[1]{\mathbb{#1}}
\newcommand{\brm}[1]{\operatorname{#1}}
\newcommand{\ve}{\textup{\textsf{v}}}
\newcommand{\e}{\textup{\textsf{e}}}
\newcommand{\twr}{\textup{\textsf{twr}}}
\begin{document}
	
	\title{Testability and local certification of monotone
          properties in minor-closed classes}

\author[L.~Esperet]{Louis Esperet}
\address[L.~Esperet]{Univ.\ Grenoble Alpes, CNRS, Laboratoire G-SCOP,
  Grenoble, France}
\email{louis.esperet@grenoble-inp.fr}

\author[S.~Norin]{Sergey Norin}
\address[S.~Norin]{Department of Mathematics and Statistics, McGill
  University, Montreal, Canada}
\email{sergey.norin@mcgill.ca}

\thanks{L.\ Esperet is partially supported by the French ANR Projects
  GATO (ANR-16-CE40-0009-01), GrR (ANR-18-CE40-0032), TWIN-WIDTH
  (ANR-21-CE48-0014-01), and by LabEx
  PERSYVAL-lab (ANR-11-LABX-0025). S.\ Norin is supported by an NSERC Discovery grant.}

        \begin{abstract}
          The main problem in the area of graph property testing is to
          understand which graph properties are \emph{testable}, which
          means that with constantly many queries to any input graph $G$, a
          tester can decide with good probability whether $G$ satisfies the property, or is
          far from satisfying the property. 
          Testable properties are
          well understood in the dense model and in the bounded degree
          model, but little is known in sparse graph classes when
          graphs are allowed to have unbounded degree. This is the
          setting of the \emph{sparse model}.

          We prove that for any proper
          minor-closed class $\Go$, any monotone property (i.e., any property that is
          closed under taking subgraphs) is testable for graphs from
          $\Go$ in the sparse model. This extends a result of Czumaj
          and Sohler (FOCS'19), who proved it for
          monotone properties with finitely many forbidden subgraphs. Our result implies for instance
          that for any integers $k$ and $t$, $k$-colorability of $K_t$-minor free graphs is
          testable  in the sparse
          model.

          Elek recently proved that monotone properties of bounded
          degree graphs from minor-closed classes that are closed
          under disjoint union can be verified by an
    approximate proof labeling scheme in constant time. We show again
          that the assumption of bounded degree can be omitted in his result. 
\end{abstract}
        
	\maketitle

        \section{Introduction}

        \subsection{Property testing}

        We say that a graph $G$ is \emph{$\eps$-far from some property $\Po$}
        if one needs to modify at least $\eps |E(G)|$ of its adjacencies
        (replacing edges by non-edges and vice-versa) in order to
        obtain a graph satisfying $\Po$. A property is \emph{testable}
        if for any graph $G$, a tester can decide with good
        probability whether $G$ satisfies $\Po$ or is $\eps$-far from $\Po$, by only making a constant
        number of queries to a given representation of $G$ (i.e., the number of
        queries depends only on $\eps$ and $\Po$, but is
        independent of the input graph $G$). The tester has
        \emph{one-sided error}  if it always gives the correct answer when
        $G\in \Po$, and \emph{two-sided error} otherwise.

        \medskip

        In the \emph{dense graph model} \cite{GGR98}, there is a good
        understanding of which properties are testable with two-sided
        error \cite{AFNS09} and one-sided error
        \cite{AS08a,AS08b}. In \emph{the bounded degree model}
        \cite{GR02}, a sequence of papers \cite{BSS10,CSS09,HKNO09} culminated in a proof that
        every property is testable with two-sided error  within any \emph{hyperfinite} graph
        family (this includes for instance any proper  minor-closed
        class) \cite{NS13}. The bounded degree assumption is crucial
        for obtaining this result and it has since then been an
        important open problem to obtain testability results in the
        weaker \emph{sparse model}, which does not assume that the maximum
        degree is bounded \cite{CMOS19,CS19}. In this more general model there are two types of
        queries: given a vertex $v$, we can query the degree $d(v)$
        of $v$ in $G$; we can also query the $i$-th neighbor of $v$,
        for $1\le i \le d(v)$ (all these queries are assumed to take
        constant time). In this model, much less is known: it was
        proved that bipartiteness is testable within any minor-closed
        class in \cite{CMOS19}, while already in the bounded degree
        model many simple properties are not testable in general graph
        classes \cite{GR02}, so the restriction to a sparse
        structured class such as a proper minor-closed class is very
        natural in this context. The interested reader is referred to
        the book of Goldreich \cite{Gol17} for more results and
        references on property testing, and especially Chapter 10 in
        the book, which focuses on the general graph model.

        \medskip

        Instead of working in the sparse model as defined above, it
        will be enough to restrict ourselves to a single type of
        query: given a vertex $v$, we query a random neighbor of
        $v$, uniformly among the neighbors of
        $v$. Following \cite{CS19}, we say we make queries to the 
        \emph{random neighbor oracle}. Note that this type of queries
        can clearly be implemented is the sparse model, so this is a
        restriction of the model (see \cite{CS19} for a comparison
        between these two models, and a third one were we are allowed to query a
        constant number of 
        \emph{distinct} random neighbors of a given vertex).
        The following was recently proved by Czumaj and Sohler \cite{CS19}.

        \begin{thm}[\cite{CS19}]\label{thm:CS}
For every proper minor-closed class $\Go$, and any finite family
$\Ho$, the property of being $\Ho$-free for graphs from $\Go$ is
testable with one-sided error in the sparse model, where only queries
to the random neighbor oracle are allowed.
\end{thm}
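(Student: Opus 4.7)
Set $h = \max_{H \in \Ho}|V(H)|$. My proposed tester picks $s = s(\eps, \Ho, \Go)$ seed vertices uniformly at random, and from each performs a randomized BFS of depth $h$ using the random-neighbor oracle, making $T = T(\eps, \Ho, \Go)$ queries at every discovered vertex. It rejects iff the discovered subgraph contains (a subgraph isomorphic to) some $H \in \Ho$. One-sided error is automatic, since any copy discovered is a genuine subgraph of $G$. The entire substance of the theorem is to show that if $G \in \Go$ is $\eps$-far from $\Ho$-free, then the tester rejects with probability bounded below by a constant depending only on $\eps$, $\Ho$, and $\Go$.

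\textbf{From $\eps$-farness to many copies.} Because $\Go$ is a proper minor-closed class, there is a constant $c = c(\Go)$ with $|E(G')| \le c|V(G')|$ for every subgraph $G'$ of $G$. Since $G$ is $\eps$-far from $\Ho$-free, no set of fewer than $\eps|E(G)|$ edges meets every copy of a graph in $\Ho$; a greedy deletion (or LP-duality) argument then yields a family $\mc{F}$ of $\Omega_\eps(|E(G)|) = \Omega_\eps(|V(G)|)$ essentially edge-disjoint copies of members of $\Ho$. In particular, a uniformly random vertex of $G$ lies in some copy of $\mc{F}$ with probability $\Omega_\eps(1)$.

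\textbf{Recovery and the main obstacle.} It remains to show that conditioned on landing at a vertex $v$ of a copy $K \cong H \in \Ho$ in $\mc{F}$, the randomized BFS reconstructs $K$ with constant probability. This is easy if every vertex of $K$ has degree at most some $D$: with $T = O_{h,D}(\log(1/\delta))$ queries per vertex, each specific neighbor of an explored vertex is hit with probability $\ge 1-\delta$, so the radius-$h$ ball around $v$ is recovered and contains $K$. The obstacle is that $K$ may pass through vertices of unbounded degree, where a specific neighbor is sampled with vanishing probability per query. A naive fix---delete every vertex of degree exceeding some $D(\eps)$ and apply the above to the residual graph---fails in general, since (e.g.)\ in a star $K_{1,n-1}$ all edges are incident to a single vertex and no copy survives this deletion. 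The argument must therefore exploit the minor-closed assumption more carefully, presumably via the \emph{bounded-expansion} property of $\Go$ (every $r$-shallow minor has bounded average degree), to show that among the copies produced above one can extract an $\Omega_\eps(|V(G)|)$-sized subfamily in which each copy contains an \emph{anchor edge} joining two vertices of degree at most some constant $D(\eps,\Ho,\Go)$. A random seed vertex then lies on such an anchor with constant probability, and the BFS from there reconstructs $K$. Isolating and proving this anchor-edge lemma is, in my view, the main technical obstacle; once it is in place, the probabilistic analysis of the BFS and the union bound over the $s$ trials are routine.
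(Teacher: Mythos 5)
This statement is not proved in the paper at all: it is quoted verbatim from Czumaj and Sohler \cite{CS19} and used as a black box, so there is no in-paper argument to compare yours against. Judged on its own terms, your proposal correctly sets up the tester, correctly gets one-sided error, and correctly derives $\Omega_\eps(\e(G))$ edge-disjoint copies from $\eps$-farness (a greedy deletion argument suffices, since each copy has $O(1)$ edges and $\e(G)\le C\ve(G)$ by Mader's theorem). But the entire difficulty of the theorem lives in the recovery step, and there you explicitly stop short: you name an ``anchor-edge lemma'' and concede you cannot prove it. That is a genuine gap, not a routine omission --- it is precisely the content of the $\sim$50-page argument in \cite{CS19}.

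Moreover, the anchor-edge formulation you propose would not suffice even if proved. An edge $uv$ of a copy $K\cong H$ with both $\d(u),\d(v)\le D$ lets the BFS certify $u$, $v$, and all of their neighbours, but any vertex of $K$ at distance $\ge 2$ from $\{u,v\}$ inside $K$ that is reachable only through a high-degree intermediate vertex of $K$ is found with probability $o(1)$ per query; so for any $H$ with more than three vertices the copy is not reconstructed. What one would need is that $\Omega_\eps(\e(G))$ of the copies have \emph{all} vertices of bounded degree, and that is simply false in general (take $G$ planar and $\eps$-far from triangle-free where every triangle uses a vertex of degree $n^{1/2}$). The actual proof in \cite{CS19} circumvents this by not attempting to recover a fixed copy at all: it analyses a random-neighbour exploration via a hierarchical sequence of edge contractions that collapse high-degree neighbourhoods, using the minor-closedness of $\Go$ to control the contracted graphs at every level, and shows a copy of \emph{some} $H\in\Ho$ (not a prescribed one) is hit. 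Your instinct that bounded expansion must enter is right, but the route you sketch does not reach the theorem.
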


Here we say that a graph is \emph{$H$-free} if it does not contain $H$ as a
subgraph, and \emph{$\Ho$-free} if it is $H$-free for every $H\in \Ho$.
Our main result is an extension of Theorem~\ref{thm:CS}  to any
\emph{monotone} property, that is any property closed under taking
subgraphs. 

\begin{thm}\label{thm:main}
For every proper minor-closed class $\Go$, and any monotone property
$\Po$, the property of satisfying $\Po$ for graphs from $\Go$ is
testable with one-sided error in the sparse model, where only queries
to the random neighbor oracle are allowed.
\end{thm}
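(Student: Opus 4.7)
The plan is to deduce Theorem~\ref{thm:main} from Theorem~\ref{thm:CS} by approximating the monotone property $\Po$ from within by a finite forbidden subgraph family. Let $\Ho$ be the (possibly infinite) family of subgraph-minimal graphs not satisfying $\Po$; since $\Po$ is monotone, $G\in\Po$ if and only if $G$ is $\Ho$-free. The crux of the argument is the following \emph{bounded-witness} statement, which I will call $(\star)$: for every $\eps>0$ there exists an integer $K=K(\eps,\Go,\Po)$ such that every $G\in\Go$ which is $\eps$-far from $\Po$ contains some $H\in\Ho$ with $|V(H)|\le K$ as a subgraph. Given $(\star)$, the tester for $\Po$ with parameter $\eps$ is simply the Czumaj--Sohler tester of Theorem~\ref{thm:CS} applied to the finite family $\Ho_K:=\{H\in\Ho:|V(H)|\le K\}$: if $G\in\Po$ then $G$ is $\Ho$-free, hence $\Ho_K$-free and always accepted, while if $G$ is $\eps$-far from $\Po$ then $(\star)$ produces an $H\in\Ho_K$ inside $G$, which is detected with good probability. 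One-sided error is preserved.

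To prove $(\star)$ I would combine a structural decomposition of $\Go$ with a compactness/contradiction argument. Suppose $(\star)$ failed for some $\eps>0$, giving a sequence $(G_n)_{n\ge 1}$ in $\Go$ with each $G_n$ being $\eps$-far from $\Po$ and yet $\Ho_n$-free, where $\Ho_n:=\{H\in\Ho:|V(H)|\le n\}$. After passing to a subsequence, one would try to extract a local-convergence limit of the $G_n$ in a sense adapted to the random-neighbor oracle (essentially the local weak limit used to organize the arguments in \cite{CS19}). In the limit no finite $H\in\Ho$ appears as a subgraph, so every bounded-radius rooted neighborhood looks $\Ho$-free. On the other hand, $\eps$-farness from $\Po$ is a statement about local edge densities and should survive the limit, producing the contradiction. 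A more combinatorial route that avoids limits would be to prove an ``edge-hyperfiniteness with apex control'' decomposition for $\Go$: for every $\delta>0$ one can remove $\delta|E(G)|$ edges plus a small controlled set of high-degree vertices so that what remains splits into vertex-disjoint pieces, each of size bounded by some $N(\delta)$. Choosing $\delta\ll\eps$, the remainder is still $(\eps-\delta)$-far from $\Po$, hence contains some forbidden subgraph; monotonicity forces such a subgraph to live inside a single bounded piece, bounding its size by $N(\delta)$ and establishing $(\star)$.

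The main obstacle, I expect, is precisely what sets the sparse model apart from the bounded-degree model: localizing a forbidden subgraph in the presence of vertices of arbitrarily large degree. In the bounded-degree setting, the hyperfiniteness of proper minor-closed classes together with a BFS exploration immediately gives bounded-size witnesses from $\eps$-farness. Here, a high-degree vertex can participate in a forbidden $H$ through only a tiny fraction of its incident edges, so one cannot afford to enumerate its neighborhood and one needs the random-neighbor sampling machinery underlying Theorem~\ref{thm:CS}. Producing the right vertex partition -- presumably using the sparsity (bounded degeneracy, or more strongly bounded expansion) of $\Go$, the Robertson--Seymour apex--vortex structure, and the treatment of high-degree ``hubs'' already present in \cite{CS19} -- is where the real work lies. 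Once $(\star)$ is in hand, the theorem follows by invoking Theorem~\ref{thm:CS} as a black box on $\Ho_K$.
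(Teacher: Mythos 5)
Your overall strategy---approximate $\Po$ by a finite subfamily $\Ho_K$ of its minimal obstructions and call \cref{thm:CS} as a black box---is indeed the paper's strategy, but there are two gaps, one in the reduction and one (the serious one) in the key lemma. First, the reduction as written misuses \cref{thm:CS}: that theorem is a property \emph{tester} for $\Ho_K$-freeness, so it is only guaranteed to reject graphs that are $\delta$-\emph{far} from being $\Ho_K$-free; a graph that merely \emph{contains} one copy of some $H\in\Ho_K$, which is all your statement $(\star)$ provides, need not be rejected with any nontrivial probability (it may be a single edge away from $\Ho_K$-freeness). What is needed is that $\eps$-farness from $\Po$ forces \emph{linearly many} edge-disjoint bounded-size obstructions, i.e.\ $\delta$-farness from $\Ho_K$-freeness; this is exactly the paper's \cref{thm:LEP}. (This particular gap is repairable: a uniform version of $(\star)$ can be bootstrapped by greedily deleting the edge sets of copies one at a time, since after deleting fewer than $\eps\,\e(G)/2$ edges the remaining graph is still in $\Go$ and still $\eps/2$-far from $\Po$. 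But you do not carry out this step, and you assert instead that one copy is ``detected with good probability,'' which is false.)

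The substantive gap is that neither of your proposed routes establishes $(\star)$ itself. The compactness route waves at precisely the hard point: it is not clear that $\eps$-farness survives a local weak limit when degrees are unbounded, and making this precise in the sparse model is the open difficulty, not a routine step. The combinatorial route relies on decomposing $G$ into \emph{bounded-size} pieces after deleting $\delta\,\e(G)$ edges plus a few high-degree vertices; this fails already for the star $K_{1,n}$, where deleting any small fraction of edges leaves one huge component and deleting the center costs all $n$ edges. The paper's key insight is to aim for bounded \emph{tree-depth} rather than bounded component size: by the Dvo\v{r}\'ak--Sereni fractional fragility result (\cref{t:DS}, \cref{c:td}) one can delete at most $\eps\,\e(G)/2$ edges from any $G\in\Go$ so that what remains has tree-depth at most some $d=d(\Go,\eps)$, and by the Ne\v{s}et\v{r}il--Ossona de Mendez lemma (\cref{lem:NO}) every subgraph-minimal non-member of $\Po$ of tree-depth at most $d$ has at most $N(d,\Po)$ vertices. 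Taking a maximal collection of edge-disjoint non-$\Po$ subgraphs on at most $N$ vertices, deleting their edges and then the fragility edges, one either lands in $\Po$ (so $G$ was $\eps$-close) or finds a new small obstruction contradicting maximality. This is how high-degree vertices are handled---a star has tree-depth $2$ no matter how large its degree---and it is the ingredient missing from your proposal.
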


Note that for any monotone property $\Po$ there is a (possibly
infinite) family of graphs $\Ho$ such that $\Po$ is precisely the property of
being $\Ho$-free. This family $\Ho$ can be simply defined as the class
of all the
graphs that do not satisfy $\Po$, or as the class of all the
graphs that do not satisfy $\Po$ and are minimal with this property (with respect to the subgraph
relation).  It
follows that Theorem~\ref{thm:main} is the natural generalization of
Theorem~\ref{thm:CS}, where we remove the assumption that $\Ho$ is
finite. This can be seen as an analogue of the situation in the dense
graph model: it was first proved that the property of being $H$-free
(or $\Ho$-free for finite $\Ho$) was testable in this model \cite{ADLRY94}, and then
only much later was this extended to all monotone classes by Alon and
Shapira~\cite{AS08a}. Note that many natural monotone properties, such as
being planar or $k$-colorable for some $k\ge 2$, do not have a finite set of
minimal forbidden subgraphs. So there is a fundamental gap between being $H$-free and
being $\Ho$-free for infinite $\Ho$.

\subsection{Local certification}
We now describe our second main result, which is obtained by
extending the methods used in the proof of \cref{thm:main}.
We start by introducing the setting of this result: The problem of \emph{local certification}.

In this part, all graphs are assumed to be connected. The vertices of any $n$-vertex
graph $G$ are assumed to be assigned distinct (but otherwise arbitrary)
identifiers $(\text{id}(v))_{v\in V(G)}$ from
$\{1,\ldots,\text{poly}(n)\}$. In the remainder of this section, all
graphs are implicitly labelled by these distinct identifiers (for
instance, whenever we talk about a subgraph $H$ of a graph $G$, we
implicitly refer to the corresponding labelled subgraph of $G$). We follow the
terminology introduced by G\"o\"os and Suomela~\cite{lcp}.

\subsubsection*{Proofs} A \emph{proof} for a graph $G$ is a function
$P:V(G)\to \{0,1\}^*$ ($G$ is considered as a labelled graph,
so the proof $P$ is allowed to depend on the identifiers of the
vertices of $G$). The binary words $P(v)$ are called \emph{certificates}. The \emph{size} of $P$ is the maximum size of a
certificate $P(v)$, for $v\in V(G)$.

\subsubsection*{Local verifiers} A \emph{verifier} $\mathcal{A}$ is a function that takes a
graph $G$, a proof $P$
for $G$, and a vertex $v\in V(G)$ in input, and outputs an element of
$\{0,1\}$. We say that $v$ \emph{accepts} the instance if
$\mathcal{A}(G,P,v)=1$ and that $v$ rejects the instance if
$\mathcal{A}(G,P,v)=0$.

Consider an integer $r\ge 0$, a graph $G$, a proof $P$ for $G$, and a
vertex $v\in V(G)$. Let $B_r(v)$ denote the set of vertices at
distance at most $r$ from $v$ in $G$. We denote by $G[v,r]$ the
subgraph of
$G$ induced by $B_r(v)$, and similarly we denote by $P[v,r]$ the restriction of $P$ to
$B_r(v)$.

A verifier $\mathcal{A}$ is \emph{local} if there is a constant $r\ge
0$, such that for any $v\in G$,
$\mathcal{A}(G,P,v)=\mathcal{A}(G[v,r],P[v,r],v)$. In other words, the
output of $v$ only depends on the ball of radius $r$ centered in $v$,
for any vertex $v$ of $G$. The constant $r$ is called the \emph{local
  horizon} of the verifier.

\subsubsection*{Proof labelling schemes}

For an integer $r\ge 0$, an \emph{$r$-round proof labelling scheme}
for a graph class $\Go$ is a prover-verifier pair
$(\mathcal{P},\mathcal{A})$, with the following properties.

\smallskip

\noindent {\bf $r$-round:} $\mathcal{A}$ is a local verifier with
local horizon at most $r$.

\smallskip

\noindent {\bf Completeness:} If $G\in \Go$, then
$P=\mathcal{P}(G)$ is a proof for $G$ such that for any vertex $v\in
V(G)$, $\mathcal{A}(G,P,v)=1$.

\smallskip

\noindent {\bf Soundness:}  If $G\not\in \Go$, then for every proof
$P'$ for $G$, there exists a vertex $v\in
V(G)$ such
that  $\mathcal{A}(G,P',v)=0$.

\medskip

In other words, upon looking at its ball of radius $r$ (labelled by
the identifiers and certificates), the local verifier of each vertex
of a graph $G\in \Go$ accepts the instance, while if $G\not\in
\Go$, for every possible choice of certificates, the local verifier of at least one vertex rejects the instance. 

\medskip

The \emph{complexity} of the labelling scheme is the maximum size of a
proof $P=\mathcal{P}(G)$ for an $n$-vertex graph $G\in\mathcal{F}$. If we say that the complexity is $O(f(n))$, for some
function $f$, the $O(\cdot)$ notation refers to $n\to \infty$. See
\cite{Feu21,lcp} for more details on proof labelling schemes and local
certification in general.

\medskip

It was proved in \cite{planar} that planar graphs have a 1-round proof
labelling scheme of complexity $O(\log n)$, and that this complexity is
optimal. The authors of \cite{planar} asked whether this can be
extended to any proper minor-closed class. This was indeed extended in \cite{genus} to graphs embeddable in a
fixed surface (see also \cite{EL} for a short proof), to graphs
avoiding some small minors in \cite{BFT}, and more generally to any
minor-closed class of bounded tree-width in \cite{tw} (in the last
result, the complexity is $O(\log^2 n)$ instead of $O(\log n)$ in the
other results mentioned here).

\medskip

For
$\eps>0$, define
an \emph{$r$-round $\eps$-approximate proof labelling scheme} for some
class $\Go$ exactly as in the definition of $r$-round proof labelling
scheme above, except that in the soundness part, the condition ``If
$G\not\in \Go$'' is replaced by ``If $G$ is $\eps$-far from
$\Go$'' \cite{CPP17}. A graph class $\Go$ is \emph{summable} if for any
$G_1,G_2\in \Go$, the disjoint union of $G_1$ and $G_2$ is also in
$\Go$. Elek recently proved the following result \cite{Ele20}. 

\begin{thm}[\cite{Ele20}]\label{thm:elek}
  For any $\eps>0$ and integer $D\ge 0$, and any monotone summable property $\Po$ of a proper
minor-closed class $\Go$, there are constants $r\ge 0$ and $K\ge 0$
such that  the class of graphs from $\Po$ with maximum degree at most
$D$  has an $r$-round $\eps$-approximate proof
labelling scheme of complexity at most $K$.
\end{thm}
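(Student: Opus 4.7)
\medskip\noindent\textbf{Proof plan.} The strategy combines two classical ingredients: any bounded-degree graph in a proper minor-closed class $\Go$ is \emph{hyperfinite}, and $\Po$ is \emph{summable}. Hyperfiniteness yields, for every $\eta>0$, a constant $K=K(\eta,D,\Go)$ such that every $G\in\Go$ of maximum degree at most $D$ admits an edge set $F\subseteq E(G)$ with $|F|\le\eta\,|V(G)|$ whose removal leaves connected components of size at most $K$. The plan is to have the prover commit to such a decomposition, and the verifier check, locally, that each component lies in $\Po$ while $|F|$ stays a small fraction of $|E(G)|$.

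Concretely, fix $\eta=\eta(\eps,D)$ small and set $K=K(\eta,D,\Go)$. The prover's certificate at a vertex $v$ uses one bit per incident edge to indicate membership in $F$, plus a few additional bits for the density witness below; the total is $O(D)=O(1)$. The verifier at $v$, with local horizon $K$, checks: (i) that $v$ has degree at most $D$ and that the cut/non-cut labels of shared edges agree with those in neighbors' certificates; (ii) that the component $C(v)$ of $v$ in $G-F$, reconstructed by following non-cut edges within $B_K(v)$, has at most $K$ vertices and satisfies $\Po$ (decidable by brute force since $|C(v)|\le K$); and (iii) a local density condition certifying $|F|\le(\eps/2)\,|E(G)|$, for instance by having the prover pair each cut edge with $\Theta(1/\eps)$ distinct non-cut ``witness'' edges in nearby pieces, where each non-cut edge is permitted to serve as a witness for at most one cut edge.

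For soundness, suppose $G$ has maximum degree at most $D$, is $\eps$-far from $\Po$, yet some proof $P'$ makes every verifier accept. Then (iii) gives $|F|\le(\eps/2)\,|E(G)|$, (ii) gives $C(v)\in\Po$ for every $v$, and the summability of $\Po$ then implies $G-F\in\Po$. Hence $G$ is $(\eps/2)$-close to $\Po$, contradicting $\eps$-farness. For completeness, the prover starts from a hyperfinite decomposition with $\eta$ taken small and, if necessary, enlarges $K$ by a constant factor so that each cut edge is adjacent to pieces containing enough non-cut edges to supply the distinct witnesses required by (iii). The main obstacle will be this final step: refining the hyperfinite decomposition so that every cut edge can locally exhibit the required number of witnesses without reuse, particularly in sparse regions of $G$ (long paths, low-degree vertices) where the naive per-vertex density threshold would be violated. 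This uses the bounded-degree assumption crucially: every merge or local adjustment of a piece increases its size only by a bounded factor, which lets one iteratively fix offending vertices while keeping the maximum piece size bounded by some $K'=K'(\eps,D,\Go)$, yielding an $r$-round $\eps$-approximate proof labelling scheme with $r$ and certificate size both $O(1)$.
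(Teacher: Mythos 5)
Your overall architecture (a hyperfinite decomposition into pieces of size at most $K$, local reconstruction of the piece containing $v$, a brute-force check that each piece is in $\Po$, and summability to glue the pieces back into $G\setminus F\in\Po$) is sound, and it is essentially the route of Elek's original argument rather than the one this paper takes. The genuine gap is precisely the step you yourself flag as ``the main obstacle'': the \emph{completeness} of your density certificate (iii). Soundness of the witness scheme is fine --- if every non-cut edge serves at most one cut edge and every cut edge exhibits $t=\Theta(1/\eps)$ witnesses, then $t|F|\le \e(G)$ --- but you never establish that the honest prover can actually produce such an injective assignment. The obstruction is that hyperfiniteness bounds $|F|$ only globally; it says nothing about the ratio of internal to incident-boundary edges of an individual piece, nor of a sub-collection of pieces. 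A complete binary tree cut into depth-$(\log K)$ subtrees is instructive: each piece has roughly $2K$ internal edges and roughly $2K$ incident cut edges, so no piece can hand out $t$ witnesses to each of its incident cut edges, and your proposed fix (iteratively merging offending pieces) does not improve the ratio --- merging a piece with all of its child pieces yields $O(K^2)$ internal edges but also $O(K^2)$ incident cut edges. What saves this example is a \emph{directional} assignment (each cut edge draws all its witnesses from the piece below it); in general the existence of such an assignment is a Hall-type condition on the piece structure which is not implied by $|F|\le\eta\,\ve(G)$ alone and which you would have to prove. Until then, the honest prover may be unable to satisfy your own verifier on some bounded-degree inputs, which breaks completeness.

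It is worth seeing how the paper sidesteps this difficulty entirely when proving the stronger \cref{thm:lcertsum} (which even drops the degree bound): instead of a single partition with a globally small cut set, it uses a cover $X_1,\ldots,X_s$ of $V(G)$ by a \emph{constant} number of sets, each inducing components of bounded weak diameter (\cref{t:asdim}, or \cref{t:DS} for minor-closed classes), such that every vertex lies in at least $(1-\delta)s$ of the $X_i$. The certificate at $v$ is just the index set $\{i: v\in X_i\}$; the verifier checks the weak-diameter and membership-in-$\Po$ conditions for all colours simultaneously; and the bound $|F_i|\le\eps\,\e(G)$ for \emph{some} $i$ falls out of an averaging argument carried out only in the soundness analysis, so no locally checkable edge-counting device is needed at all. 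To complete your single-partition approach you must either prove the Hall-type condition above or replace step (iii) by such a covering-and-averaging mechanism.
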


A natural problem is whether the bounded degree assumption in Elek's
result can be omitted (Elek's proof crucially relies on this
assumption). We prove that the bounded degree assumption can indeed be omitted.

\begin{thm}\label{thm:lcertsum}
For any $\eps>0$ and any monotone summable property $\Po$ of a proper
minor-closed class $\Go$, there are constants $r\ge 0$ and $K\ge 0$
such that $\Po$ has an $r$-round $\eps$-approximate proof
labelling scheme of complexity at most $K$.
\end{thm}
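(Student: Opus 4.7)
The plan is to deduce \cref{thm:lcertsum} from \cref{thm:main} by a simple ``witness-search'' construction: the prover supplies empty certificates, and the verifier at each vertex locally searches its constant-radius ball for a copy of some forbidden subgraph from a finite family $\Ho'$. To extract such an $\Ho'$, let $T_\eps$ be the one-sided error tester from \cref{thm:main}, making $q=q(\eps)$ random neighbor queries from a uniformly random starting vertex. Because $T_\eps$ has one-sided error and $\Po$ is monotone, any rejection of $T_\eps$ on input $G$ must be due to the explored subgraph $S$ lying outside $\Po$ (else $S$ would embed into some $G'\in\Po$, and one-sidedness would force $T_\eps$ to accept $G'$, and hence $G$). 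Since $\Po$ is monotone, $S$ contains some minimal forbidden subgraph $H\in\Ho$; since $|V(S)|\leq q+1$ and $S$ lies in the $q$-ball around the starting vertex, we obtain a finite family $\Ho'\subseteq\Ho$ of graphs on at most $q+1$ vertices. Summability ensures each $H\in\Ho'$ is connected: a disconnected minimal $H=H_1\sqcup H_2$ would give $H_1,H_2\in\Po$ by minimality, hence $H\in\Po$ by summability, a contradiction. Averaging the rejection probability of $T_\eps$ over the random starting vertex yields constants $R=R(\eps)$ and $\delta=\delta(\eps)>0$ such that every $G\in\Go$ which is $\eps$-far from $\Po$ has at least $\delta|V(G)|$ vertices $v$ with $G[v,R]$ containing a subgraph isomorphic to some $H\in\Ho'$.

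Given $\Ho'$, the scheme is immediate. The prover supplies the empty certificate, so the complexity is $K=0$; the verifier at $v$ has local horizon $r=R$ and rejects iff it finds a copy of some $H\in\Ho'$ in $G[v,R]$. Completeness follows from monotonicity, since every $G\in\Po$ is $\Ho'$-free. Soundness on inputs $G\in\Go$ follows directly from the extraction above.

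The main obstacle is soundness on inputs $G\notin\Go$ that are $\eps$-far from $\Po$. I would handle these by enlarging $\Ho'$ with a second finite family $\Ho''$ of connected ``density witnesses'' for $\Go$: small subgraphs whose average degree exceeds the maximum degeneracy attained in $\Go$, and which are therefore absent from every graph in $\Go\supseteq\Po$. By Mader-type bounds for proper minor-closed classes, $\Ho''$ can be taken finite. One then argues that any $G$ which is $\eps$-far from $\Po$ contains $\Omega(|V(G)|)$ vertices hosting an $\Ho'$- or $\Ho''$-witness in a constant-radius ball: either $G$ is close to some $G^*\in\Go$, in which case the extraction argument applied to $G^*$ transfers local $\Ho'$-witnesses back to $G$ (at a small cost in $\eps$), or $G$ is $\eps/2$-far from $\Go$, in which case a standard extremal counting produces $\Omega(|V(G)|)$ disjoint small dense subgraphs witnessing $\Ho''$. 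Extending the verifier to reject on either type of witness preserves the constant local horizon and certificate size, completing the proof.
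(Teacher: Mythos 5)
Your scheme uses empty certificates and a purely local verifier, and this cannot work: the difficulty you correctly flag for inputs $G\notin\Go$ is not repairable along the lines you sketch. Take $\Po=\Go=$ the class of planar graphs (monotone and summable, since minimal non-planar subgraphs are connected), and let $G$ be a $3$-regular expander of girth larger than $2r+1$, where $r$ is your local horizon. Such a graph is $\eps_0$-far from planarity for a universal $\eps_0>0$ (bounded-degree expanders are not hyperfinite), yet every ball of radius $r$ in $G$ induces a tree. Hence for every vertex $v$ the labelled view $G[v,r]$ is itself a connected planar graph in which $v$ sees exactly the same view with the same (empty) certificates; completeness forces your verifier to accept there, hence to accept at every vertex of $G$, violating soundness. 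The same example kills the proposed family $\Ho''$ of ``density witnesses'': being $\eps$-far from a proper minor-closed class does not force the presence of any small subgraph of large average degree --- Mader's bound says that dense graphs lie outside $\Go$, but sparse graphs (expanders) can also be far from $\Go$. In short, no approximate scheme with empty certificates and constant horizon exists here; the certificates must carry global structural information.

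This is exactly what the paper's proof supplies. It establishes the stronger \cref{thm:lcertsumasdim}: using the bounded asymptotic dimension of $\Go$ (via \cref{t:asdim}), the prover writes at each vertex the constant-size index set $I(v)$ of a cover $X_1,\dots,X_s$ whose monochromatic components have weak diameter at most $D$ and which covers each vertex at least $(1-\delta)s$ times. The verifier checks the coverage condition, checks that every monochromatic component through $v$ has weak diameter at most $D$, and checks that $G[v,2D+1]$ lies in $\Po$. If all vertices accept, each monochromatic component is contained in a ball lying in $\Po$, hence lies in $\Po$ by monotonicity, hence each $G[X_i]\in\Po$ by summability; averaging then yields $F_i$ with $|F_i|\le\eps\,\e(G)$ and $G\setminus F_i\in\Po$, so $G$ is $\eps$-close to $\Po$. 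An expander cannot fool this verifier because it admits no such cover. Your treatment of inputs $G\in\Go$ --- extracting a finite family $\Ho'$ of bounded-size connected obstructions --- is essentially sound (it is cleaner to obtain it from \cref{thm:LEP} and \cref{lem:NO} than from the tester as a black box), but it only addresses the easy half of the problem.
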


We indeed prove a far-reaching generalization of this result
(whose statement was suggested by Elek to the authors), concerning graph classes with
bounded asymptotic dimension.

\subsubsection*{Asymptotic dimension}
Given a graph $G$ and an integer $r\ge 1$,
we denote by $G^r$ the graph obtained from $G$ by adding edges
between any pair of vertices at distance at most $r$ in $G$. The \emph{weak diameter} of a set $S$ of
  vertices of $G$ is the maximum distance in $G$ between two vertices
  of $S$.

  \smallskip
  
For an integer $d\ge 0$, a class of graphs $\Go$ has \emph{asymptotic
  dimension} at most $d$ if there is a
function $D:\mathbb{N}\to \mathbb{N}$ such that for any integer $r\ge
1$, any graph $G\in \Go$
has a $(d+1)$-coloring of its vertex set such that any monochromatic\footnote{A \emph{monochromatic component} in a colored graph $G$ is a
connected component of a subgraph of $G$ induced by one of the color classes.}
component of $G^r$ has
weak diameter at most $D(r)$ in $G$.

This notion was introduced by Gromov \cite{Gro93} in the more general context of
metric spaces. In the specific case of graphs, it was proved that
classes of bounded tree-width have asymptotic dimension at most 1, and
proper minor-closed classes have asymptotic dimension at most
2 \cite{asdim}. It was also proved that $d$-dimensional grids and families of graphs defined by the intersection
of certain objects (such as unit balls) in $\mathbb{R}^d$ have asymptotic
dimension $d$ \cite{asdim}. On the other hand, it is known that any class of bounded
degree expanders has infinite asymptotic dimension (see~\cite{Hum17}).

\smallskip

We will prove  the following  generalization of Theorem \ref{thm:lcertsum}.

\begin{thm}\label{thm:lcertsumasdim}
For any $\eps>0$ and any monotone summable property $\Po$ of a class
$\Go$ of bounded asymptotic dimension, there are constants $r\ge 0$ and $K\ge 0$
such that $\Po$ has an $r$-round $\eps$-approximate proof
labelling scheme of complexity at most $K$.
\end{thm}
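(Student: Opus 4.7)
The plan is to use the $(d+1)$-coloring provided by bounded asymptotic dimension to decompose $V(G)$ into pieces of bounded weak diameter, and to certify each piece's membership in $\Po$ locally. Given $\eps > 0$, I would choose a scale $r_0 = r_0(\eps)$ and put $D = D(r_0)$ as given by the asymptotic dimension of $\Go$; every $G \in \Go$ admits a $(d+1)$-coloring $c \colon V(G) \to \{1,\ldots,d+1\}$ whose monochromatic components in $G^{r_0}$ have weak diameter at most $D$ in $G$. Since distinct monochromatic components of $G^{r_0}$ in the same color class are at $G$-distance greater than $r_0 \geq 1$, edges of $G$ within a color class stay inside a single component, so the coloring partitions $V(G)$ into bounded-weak-diameter pieces and the only edges not internal to any piece are the bichromatic ones (call this set $B$).

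The prover's certificate consists, at each vertex $v$, of the color $c(v)$ together with constant-size auxiliary data. The verifier, with local horizon $r := D + r_0$, can (i)~simulate BFS on $G^{r_0}$ restricted to color $c(v)$, which stays inside the ball $B_D(v)$, and verify that its monochromatic piece $C_v$ has weak diameter at most $D$, and (ii)~evaluate $\Po$ on the subgraph of $G$ induced by $C_v$. For completeness, if $G \in \Po$ the prover provides the asymptotic-dimension coloring: each $C_v$ is a subgraph of $G$, hence lies in $\Po$ by monotonicity, so the verifier accepts. For soundness, if every verifier accepts under some prover certificate then each piece lies in $\Po$; summability yields $G \setminus B \in \Po$, witnessing that $G$ is at edit distance $|B|$ from $\Po$. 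Everything comes down to guaranteeing $|B| \leq \eps|E(G)|$.

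Forcing this bound on $|B|$ is the main obstacle, since local validity of the coloring alone is not enough: for instance, on a large grid the coloring $v \mapsto (x+y) \bmod 3$ is a valid asymptotic-dimension coloring at scale $r_0 = 1$ yet makes every edge bichromatic. The plan is to use the $O(1)$ auxiliary bits in the certificate to carry a locally verifiable charging of the bichromatic edges, for example by assigning each vertex a priority $p(v)$ (encoded in constant bits) with the rule that each bichromatic edge is paid for by its lower-priority endpoint; the verifier then rejects $v$ unless the number of bichromatic edges charged to $v$ is at most $\eps \cdot \deg(v)$, and summing per-vertex bounds gives $|B| \leq \eps|E(G)|$. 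For completeness on $G \in \Po$, the existence of such a charging reduces to a structural statement: bounded-asymptotic-dimension classes admit, for every $\eps > 0$, an asymptotic-dimension coloring at some scale $r_0 = r_0(\eps)$ together with a valid charging. In the bounded-degree setting treated by Elek this follows from hyperfiniteness; in the unbounded-degree setting treated here hyperfiniteness no longer applies, and I would instead establish it via a padded-decomposition-style argument: choosing $r_0$ large enough that any valid asymptotic-dimension coloring must be coarse, which forces intra-color edges to dominate the bichromatic ones at each vertex in an amortized sense. Carrying this out, while keeping the charging scheme expressible in $O(1)$ bits per vertex, is the main technical novelty and the hardest step.
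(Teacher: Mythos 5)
There is a genuine gap, and you have located it yourself: everything in your scheme hinges on the honest prover being able to produce a single partition into bounded-weak-diameter pieces whose bichromatic edge set $B$ satisfies $|B|\le\eps\,\e(G)$, together with a locally checkable charging certifying this. You do not prove that such a partition exists, and the heuristic you offer for it (``choosing $r_0$ large enough forces intra-color edges to dominate the bichromatic ones at each vertex in an amortized sense'') is not correct as stated: a vertex of unbounded degree can have all of its neighbours in other colour classes no matter how coarse the decomposition is, so no per-vertex bound of the form ``at most $\eps\deg(v)$ bichromatic edges charged to $v$'' can be guaranteed by the honest prover, and it is precisely the unbounded-degree setting that the theorem is about. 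In short, the completeness half of your scheme rests on an unproved (and, per vertex, false) structural statement, which is exactly the hard part of the theorem.

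The paper sidesteps this obstacle by never working with a single partition. It invokes a fractional cover coming from Brodskiy--Dydak--Levin--Mitra (Theorem~\ref{t:asdim}): a bounded number $s$ of vertex subsets $X_1,\dots,X_s$, each inducing components of weak diameter at most $D$, such that every vertex lies in at least $(1-\delta)s$ of them. The certificate of $v$ is the whole index set $I(v)=\{i: v\in X_i\}$ (still $O(1)$ bits since $s$ is a constant), the verifier checks $|I(v)|\ge(1-\delta)s$, the weak-diameter bound for every monochromatic component through $v$, and that the ball $B_{2D+1}(v)$ induces a graph in $\Po$. Soundness then follows from an averaging argument over $i$: since every vertex misses at most $\delta s$ of the sets, every edge lies outside $G[X_i]$ for at most $2\delta s$ indices, so some $F_i=E(G)\setminus E(G[X_i])$ has $|F_i|\le\eps\,\e(G)$, while summability and monotonicity give $G[X_i]\in\Po$. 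This averaging over the $s$ layers of the cover is exactly the mechanism that replaces the charging scheme you were unable to construct; without it (or some equivalent fractional/averaged decomposition), your argument does not go through.
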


Note that a monotone
property  $\Po$ is summable if and only if  all minimal forbidden subgraphs for $\mc{P}$ are connected. This includes for instance minor-closed classes whose
minimal forbidden minors are connected, such as planar graphs, $K_t$-minor
free graphs for any $t\ge 2$, graphs of bounded tree-width, graphs of
bounded tree-depth, and graphs of bounded Colin de Verdi\`ere
parameter. 



\medskip

Natural examples of non summable properties include
toroidal graphs (or more generally graphs embeddable on any fixed surface other than
the sphere).
For monotone properties that are not necessarily summable, we prove
the following. 

\begin{thm}\label{thm:lcert}
For any $\eps>0$ and any monotone property $\Po$ of a proper
minor-closed class $\Go$, $\Po$ has a 1-round $\eps$-approximate proof
labelling scheme of complexity $O(\log n)$.
\end{thm}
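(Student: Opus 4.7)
Since $\mc{P}$ is monotone, $\mc{P}$ is the class of $\mc{H}$-free graphs for the (possibly infinite) family $\mc{H}$ of minimal forbidden subgraphs of $\mc{P}$. The plan starts by reducing to a finite subfamily via the testability result of \cref{thm:main}: because the one-sided tester for $\mc{P}$ on $\Go$ with tolerance $\eps/2$ can reject only by witnessing a constant-size non-$\mc{P}$ subgraph, there is $c=c(\eps,\Po,\Go)$ such that every $G\in\Go$ which is $\eps/2$-far from $\mc{P}$ contains some $H\in\mc{H}$ with $|V(H)|\le c$. Let $\mc{H}_0$ denote the resulting finite family. After absorbing into the scheme an $O(\log n)$-bit approximate certification that $G\in\Go$ (up to $O(\eps|E(G)|)$ edge deletions, obtainable by certifying a suitable tree decomposition or by leveraging the bounded asymptotic dimension of $\Go$), it suffices to construct, for $G\in\Go$, a $1$-round $O(\log n)$-bit scheme that rejects whenever $G$ contains some $H\in\mc{H}_0$.

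\textbf{Spanning-tree backbone, local detection, and component signatures.}
The prover first certifies a spanning tree of each connected component of $G$ by the classical identifier-based scheme: each vertex stores its parent's identifier, the identifier of its component's root, and its depth, all $O(\log n)$ bits and checkable in one round. For each connected $H\in\mc{H}_0$, detection of copies is done by certifying an $O(1)$-round LOCAL algorithm: the prover provides, at every vertex, a compressed description of an intermediate state (for instance the isomorphism type of its $c$-ball with identifiers), one-round consistency checks with neighbors enforce truthfulness, and any actual copy of a connected $H$ is flagged by a vertex that rejects. Aggregating bottom-up along the certified spanning tree then produces, at each connected component $C$ of $G$, a bounded-size \emph{signature} $\sigma(C)$ recording the multiset of connected fragments of $\mc{H}_0$-members present in $C$, with multiplicities capped at the maximum needed to assemble some $H\in\mc{H}_0$. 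Since $\sigma(C)$ lives in a finite set and the aggregation is monoidal, partial sums stored at each vertex use $O(1)$ bits and are verified in one round against those of the children.

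\textbf{Global assembly check and main obstacle.}
Finally, the prover broadcasts inside every certificate the multiset $M$ of component signatures $\sigma(C)$ over all components of $G$; since there are only $O(1)$ possible signature values and multiplicities are at most $n$, encoding $M$ uses $O(\log n)$ bits. Each vertex locally verifies that (i) $M$ correctly contains its own component's signature, and (ii) $M$ does not admit any disjoint combination of fragments realizing some $H\in\mc{H}_0$. Together with the per-component detection of connected $H$, this rejects every $G\in\Go$ containing a copy of some $H\in\mc{H}_0$. The main obstacle is preventing an adversarial prover from under-reporting signatures to hide forbidden fragments: this is handled by the certified tree-based monoidal aggregation together with the one-round local detection of $c$-ball obstructions, since any actual fragment of a forbidden $H$ is witnessed by at least one vertex's local state and so must be propagated into the signature, making any false claim detectable somewhere. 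The $O(\log n)$-bit budget, compared with the constant-complexity schemes of \cref{thm:lcertsum,thm:lcertsumasdim}, is exactly what accommodates the spanning tree, the simulated bounded-round detection, and the global signature multiset — the three ingredients that jointly bridge the gap created by the non-summability of $\mc{P}$.
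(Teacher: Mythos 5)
There is a genuine gap, and it sits exactly at the step you describe as "local detection": certifying, against an adversarial prover, that a general graph contains no copy of a fixed connected graph $H$ with a $1$-round scheme of complexity $O(\log n)$. Your proposed mechanism --- storing at each vertex "a compressed description of the isomorphism type of its $c$-ball with identifiers" and enforcing truthfulness by one-round consistency checks --- does not work in the sparse model with unbounded degree: a $1$-ball can already contain $\Theta(n)$ vertices, so its labelled isomorphism type does not fit in $O(\log n)$ bits, and a neighbour seeing only its own radius-$1$ view cannot verify a claimed $c$-ball anyway. More fundamentally, subgraph-freeness is the co-nondeterministic direction (the prover must certify \emph{absence} of copies), and this is known to require certificates of polynomial size already for triangle-freeness in general graphs. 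For the same reason your preliminary step of "absorbing an $O(\log n)$-bit approximate certification that $G\in\Go$" is not a routine absorption --- it is essentially an instance of the theorem being proved.

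The paper's proof avoids this obstacle by never attempting to detect forbidden subgraphs in $G$ itself. It first has the prover exhibit the fractional tree-depth cover $X_1,\ldots,X_s$ of \cref{t:DS}; the subgraph-detection task is then carried out only inside the induced subgraphs $G[X_i]$, whose components have certified tree-depth at most $d$, where \cref{thm:bft} supplies a genuine $1$-round $O(\log n)$-bit certification of any first-order property (in particular $H'$-freeness). This also changes which finite family is relevant: rather than extracting $\Ho_0$ from the tester of \cref{thm:main}, the paper takes $\Ho$ to be all non-$\Po$ graphs on at most $N_{\ref{lem:NO}}(d,\Po)$ vertices, so that by \cref{lem:NO} the implication "$G[X_i]$ is $\Ho$-free and $\brm{td}(G[X_i])\le d$ implies $G[X_i]\in\Po$" holds, and soundness follows from the averaging argument of \cref{c:td} without ever certifying membership in $\Go$. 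Your spanning-tree aggregation of fragment signatures to handle disconnected forbidden subgraphs does match the paper's treatment of the family $\Ho'$ (and note that, since $G$ is assumed connected here, the fragments to be assembled live in the components of $G[X_i]$, not of $G$), but without the tree-depth decomposition the per-component detection step, and hence the whole scheme, cannot be carried out within the stated complexity.
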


While the complexity of the scheme guaranteed by \cref{thm:lcert} is not constant as in Elek's result
\cite{Ele20} and Theorem~\ref{thm:lcertsum}, we do not require any bounded degree assumption (as in Theorem~\ref{thm:lcertsum}), and
a local horizon of $1$ is sufficient. 
More importantly, the fact that $\Po$ is not necessarily summable
requires a completely different set of techniques, much closer from
the tools used to prove Theorem \ref{thm:main}.
\cref{thm:lcert}  can be
thought of as an approximate answer to the question of \cite{planar} on
the local certification of minor-closed classes.

\medskip

\subsubsection*{Organization of the paper} We start with some preliminary results in Section~\ref{sec:prel}. In
Section~\ref{sec:ob}, we prove the main technical contribution of this
paper, a result showing that if a graph from some minor-closed class is far from a monotone property
$\Po$, then it contains linearly many edge-disjoint subgraphs of
bounded size that are not in $\Po$. In Section~\ref{sec:proof} we deduce Theorem~\ref{thm:main} from this result and
Theorem~\ref{thm:CS}. Theorems~\ref{thm:lcertsum} and~\ref{thm:lcert} are proved in
Section~\ref{sec:lcert}. We conclude in Section~\ref{sec:cl} with some
remarks.

\section{Preliminaries}\label{sec:prel}

\subsubsection*{Minor-closed classes} We denote the number of vertices of a graph $G$ by $\ve(G)$, and its
number of edges by $\e(G)$. A class of graphs $\Go$ is \emph{minor-closed}
if any minor of a graph from $\Go$ is also in $\Go$. A class is \emph{proper}
if it does not contain all graphs.  The following was proved by Mader
\cite{Mad67}.

\begin{thm}[\cite{Mad67}]\label{thm:Mader}
  For any
proper minor-closed class $\Go$, there is a constant $C$ such that for
any graph $G\in \Go$, $\e(G)\le C \,\ve(G)$.
\end{thm}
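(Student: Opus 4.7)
The plan is to reduce to the case where $\Go$ excludes a complete graph and then prove the classical Mader-type bound that $K_t$-minor-free graphs have linearly many edges.

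First I would observe that since $\Go$ is proper and minor-closed, there exists a graph $H\notin \Go$. Setting $t=\ve(H)$, the graph $H$ is a minor of $K_t$, so $K_t\notin\Go$ (otherwise $\Go$ would contain $H$ by minor-closure). It therefore suffices to prove the following: for every integer $t\ge 2$ there is a constant $C_t$ such that every $K_t$-minor-free graph $G$ satisfies $\e(G)\le C_t\,\ve(G)$. This I would establish via the equivalent assertion that every graph with average degree at least some explicit $f(t)$ (say $2^{t-2}$) contains $K_t$ as a minor; then $C_t = f(t)/2$ works.

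The proof is by induction on $t$. The base cases $t\le 2$ are trivial, since even average degree $\ge 1$ forces an edge, hence a $K_2$ minor. For the inductive step, given $G$ with average degree $\ge f(t)$, a standard averaging argument (iteratively discarding vertices of small degree) produces a subgraph $G_0\subseteq G$ with minimum degree at least $f(t)/2$. I would then take $G'$ to be a minor of $G_0$ that is minor-minimal among minors of $G_0$ with average degree $\ge f(t)$; minor-minimality forces $G'$ to still have high minimum degree and, crucially, to have the property that contracting any single edge $uv$ drops the average degree below $f(t)$. A short counting argument translates this local minimality into the bound $|N_{G'}(u)\cap N_{G'}(v)|\ge f(t)/2-1$ for every edge $uv\in E(G')$. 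Fixing any vertex $u$ of $G'$, this lower bound on common neighborhoods means that the subgraph of $G'$ induced on $N_{G'}(u)$ has minimum degree at least $f(t-1)$ (for a suitable choice of the function $f$, e.g.\ $f(t)=2^{t-2}$), so by the inductive hypothesis $G'[N_{G'}(u)]$ contains $K_{t-1}$ as a minor. Adding $u$, which is adjacent to every branch set of that minor, produces a $K_t$ minor in $G'$ and hence in $G$, completing the induction.

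The main obstacle is the inductive step, and specifically controlling the constants so that the induction actually closes: the bookkeeping has to ensure that after passing to a minimum-degree subgraph, then to a minor-minimal witness, and finally to the neighborhood of a single vertex, one still has enough average degree in the neighborhood to invoke the inductive hypothesis with parameter $t-1$. A choice like $f(t)=2^{t-2}$ (or a slightly larger tower-type bound) makes each of these three reductions lose only a factor of two, which is exactly what the recursion can absorb. The routine verification of these constants, together with the standard facts that branch sets of $K_{t-1}$-minors in $N(u)$ extend to a $K_t$-minor via $u$, completes the argument and gives the stated constant $C$ depending only on the excluded $K_t$.
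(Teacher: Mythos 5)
Your argument is correct in substance. The paper does not prove this statement at all --- it simply cites Mader's original 1967 result --- so there is no ``paper proof'' to compare against; what you have written is essentially the classical textbook proof (Mader's argument, as presented e.g.\ in Diestel): reduce to excluding a complete graph $K_t$, then show by induction on $t$ that average degree at least $2^{t-2}$ forces a $K_t$ minor, via a minor-minimal witness in which every edge has many common neighbours of its endpoints, so that the neighbourhood of a vertex carries a $K_{t-1}$ minor. All the individual steps (the reduction $H \preceq K_{\ve(H)}$, the minimality argument, extending the $K_{t-1}$ minor in $N(u)$ by the apex $u$) are sound. The only point to tighten is the constant bookkeeping you already flag: with $f(t)=2^{t-2}$ the counting argument literally gives $|N(u)\cap N(v)| > f(t)/2 - 1$, and since this quantity is an integer and $f(t)/2=2^{t-3}$ is an integer for $t\ge 3$, you in fact get $|N(u)\cap N(v)| \ge f(t)/2 = f(t-1)$, which is exactly what closes the induction; stating the bound as ``$\ge f(t)/2-1$'' loses the crucial $+1$ coming from the contracted edge itself. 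Alternatively, any faster-growing $f$ (e.g.\ $f(t)=4^t$) absorbs the slack. Either fix is routine, and the resulting $C=f(t)/2$ gives the stated linear bound.
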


\subsubsection*{Tree-depth} Given a rooted tree $T$, the
\emph{closure} of $T$ is the graph obtained from $T$ by adding edges
between each vertex and its ancestors in the tree. The \emph{height} of
a rooted tree is the maximum number of vertices on a root-to-leaf path
in the tree. The
\emph{tree-depth} of a connected graph $G$ is the maximum height of a rooted tree $T$ such
that $G$ is a subgraph of the closure of $T$, and the tree-depth of a
graph $G$, denoted by $\brm{td}(G)$, is the maximum tree-depth of its connected components
(equivalently, it is equal to the maximum height of a rooted
\emph{forest} $F$ such that $G$ is a subgraph of the closure of $F$).

\medskip

The following was implicitly proved by Dvo\v r\'ak and Sereni
\cite{DS20} (in the proof below the actual definition of tree-width is
not needed, so we omit it). 

\begin{thm}[\cite{DS20}]\label{t:DS}
	For every proper minor-closed class $\mc{G}$ and every $\delta
        > 0$ there exists $d=d_{\ref{t:DS}}(\mc{G},\delta) \in \bb{N}$
        and $s=s_{\ref{t:DS}}(\mc{G},\delta) \in \bb{N}$ satisfying the following. For every $G \in \mc{G}$ there exist $X_1,X_2, \ldots, X_s \subseteq V(G)$ such that  
	\begin{itemize}
		\item for any $1\le i \le s$, $\brm{td}(G[X_i]) \leq d$, and
		\item every $v \in V(G)$ belongs to at least $(1 -
                  \delta)s$ of the sets $X_i$.
	\end{itemize} 
      \end{thm}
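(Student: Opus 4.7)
The plan is to prove \cref{t:DS} by a two-level layering argument that reduces proper minor-closed classes first to bounded tree-width, and then to bounded tree-depth. The key structural input I would invoke is that proper minor-closed classes have bounded \emph{layered tree-width}: for a BFS layering $L_0, L_1, \ldots$ of any $G \in \mc{G}$ from an arbitrary root and any integer $k \geq 1$, the subgraph induced by any $k$ consecutive BFS layers has tree-width at most some $t = t(\mc{G}, k)$. This was proved for planar graphs by Dujmovi\'c--Morin--Wood and extended to all proper minor-closed classes using the graph minor structure theorem.

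Given $\delta > 0$, choose integers $k$ and $k'$ large compared to $1/\delta$. For each \emph{outer} shift $r \in \{0, \ldots, k-1\}$, delete from $G$ the BFS layers $L_i$ with $i \equiv r \pmod k$; the remaining graph is a disjoint union of ``strips'', each an induced subgraph on $k-1$ consecutive BFS layers and hence of tree-width at most $t = t(\mc{G}, k)$. Within each strip fix a tree decomposition of width $\leq t$, and for each \emph{inner} shift $r' \in \{0, \ldots, k'-1\}$ delete every node of the decomposition tree whose distance to an arbitrary root is $\equiv r' \pmod{k'}$. Each surviving piece then carries a tree decomposition of height $\leq k'-1$ with bags of size $\leq t+1$, which implies tree-depth at most $d := (k'-1)(t+1)$ (by the standard argument of ordering vertices along an Euler tour of the decomposition tree to produce an elimination forest of the required height). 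Letting $X_{r, r'}$ denote the union of surviving vertices across all strips and pieces, one obtains $s := kk'$ sets, each satisfying $\brm{td}(G[X_{r,r'}]) \leq d$. A direct count shows that each vertex $v$ lies in exactly $(k-1)(k'-1)$ of these sets --- a fraction $(1 - 1/k)(1 - 1/k') \geq 1 - \delta$ once $k, k' \geq \lceil 3/\delta \rceil$.

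The main obstacle is the bounded layered tree-width of proper minor-closed classes, which is a substantial structural result ultimately relying on the graph minor structure theorem (or on refinements of the Dujmovi\'c--Morin--Wood product structure theorem). Everything else --- the nested modular deletion, the tree-depth bound for a bounded-height bounded-width tree decomposition, and the coverage count --- is routine once the layered tree-width input is in hand. A secondary technicality is the passage from a tree decomposition of height $h$ with bags of size $b$ to tree-depth $hb$: this is well known but should be spelled out, since the theorem requires $d$ independent of $\ve(G)$.
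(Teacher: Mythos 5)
There is a genuine gap at the very first step. The structural input you invoke --- that for every $G \in \mc{G}$ and every $k$, any $k$ consecutive BFS layers induce a subgraph of tree-width at most $t(\mc{G},k)$ --- is \emph{false} for general proper minor-closed classes. That property (bounded local tree-width, equivalently bounded layered tree-width for minor-closed classes) characterizes the classes excluding an \emph{apex} graph as a minor (Eppstein; Dujmovi\'c--Morin--Wood), not all proper minor-closed classes. A concrete counterexample: the $n\times n$ grid plus one universal vertex is $K_6$-minor-free, but a BFS from the universal vertex has only two layers, and already these two consecutive layers induce a graph of tree-width $n$. So your outer layering step collapses exactly on the classes where \cref{t:DS} is hardest. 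The paper instead uses the theorem of DeVos, Ding, Oporowski, Sanders, Reed, Seymour and Vertigan \cite{DDOSRS}: every $G\in\mc{G}$ admits a vertex partition into $t$ parts such that the union of any $t-1$ parts induces bounded tree-width. This is the correct substitute for your BFS strips and is itself a deep consequence of the graph minor structure theorem; it cannot be replaced by layered tree-width.

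The inner step also has a problem. If you delete, from a rooted tree decomposition, all vertices appearing in bags at depth $\equiv r' \pmod{k'}$, the surviving pieces are indeed confined to bands of fewer than $k'$ levels and your tree-depth bound $d=(k'-1)(t+1)$ for such pieces is fine. But the coverage count fails: a vertex's bags form a connected \emph{subtree} that may span many (even all) levels of the decomposition tree, so such a vertex is deleted for many, possibly all, of the $k'$ inner shifts --- it is not true that each vertex survives $k'-1$ of them. Making bounded tree-width graphs fractionally fragile into bounded tree-depth pieces is precisely the nontrivial content of Dvo\v{r}\'ak and Sereni's Theorem~31 \cite{DS20}, which the paper cites as a black box at this point; it is not a routine modular deletion. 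In summary, both of the two reductions you label as routine are the places where the real theorems live, and the first one rests on a false premise.
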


      \begin{proof} Let $t=\lceil
      	\tfrac2\delta\rceil$.
It was proved in \cite{DDOSRS} that there is
a constant $k=k(t,\mc{G})$ such that any graph $G\in \Go$ has a partition of its
vertex set into $t$ classes $Y_1,\ldots,Y_{t}$, such that the union of
any $t-1$
classes $Y_ i$ induces a graph
of tree-width at most $k$. In particular, if we define
$Z_i:=V(G)\setminus Y_i$ for any $1\le i \le t$, then each graph
$G[Z_i]$ has tree-width at most $k$ and each vertex $v$ lies in $t-1=(1-\tfrac1t)t$ sets $Z_i$.
Dvo\v r\'ak and Sereni \cite[Theorem 31]{DS20} proved\footnote{The property that $s$ is bounded
  independently of $G$ does not appear explicitly in the statement of their
  theorem, but readily follows from their proof. This will only be needed  in Section~\ref{sec:lcert}.}  that for every
integer $k$ and real $\delta>0$, there are integers $r=r(k,\delta)$
and $d=d(k,\delta)$
such that for any graph $H$ of tree-width at most $k$, $H$ has a cover of its
vertex set by $r$ sets $X_1,\ldots,X_r$, such that each $H[X_i]$ has
tree-depth at most $d$ and each vertex lies in at least $(1-\tfrac\delta{2})r$
sets $X_i$. Applying this result to $H=G[Z_i]$ for any $1\le i \le t$,
we obtain $rt$ sets $X_1',\ldots,X_{rt}'$ of vertices of $G$, such
that the subgraph  $G[X_i']$ induced by each of them has tree-depth at most $d$ and each vertex of $G$
lies in at least $(1-\tfrac{\delta}2)r\cdot (1-\tfrac1t)t\ge (1-\delta)rt$
sets $X_i'$. Thus $d$ and $s=rt$ satisfy the conditions of the theorem.
      \end{proof}
      
We deduce the following useful result.

\begin{cor}\label{c:td}
	For every proper minor-closed class $\mc{G}$ and every $\eps > 0$ there exists $d=d_{\ref{c:td}}(\mc{G},\eps) \in \bb{N}$ satisfying the following. For every $G \in \mc{G}$ there exist $F \subseteq E(G)$ such that $|F| \leq \eps \,\e(G)$ and $\brm{td}(G \setminus F) \leq d$. 
\end{cor}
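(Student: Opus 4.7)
The plan is to derive this corollary by averaging over the covers produced by \cref{t:DS}, with a judicious choice of the parameter $\delta$. Specifically, I would apply \cref{t:DS} with $\delta := \eps/2$, which yields an integer $d := d_{\ref{t:DS}}(\mc{G},\eps/2)$, an integer $s$, and sets $X_1,\ldots,X_s \subseteq V(G)$ such that $\brm{td}(G[X_i]) \leq d$ for each $i$, and every vertex of $G$ lies in at least $(1-\eps/2)s$ of the sets $X_i$.

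The key observation is that this coverage at the vertex level transfers to the edge level with only a factor of $2$ loss. Indeed, for any edge $uv \in E(G)$, each endpoint fails to lie in at most $(\eps/2)s$ sets, so by a union bound, $uv$ is an edge of $G[X_i]$ for at least $(1-\eps)s$ indices $i$. Summing this bound over all edges of $G$ gives
\[
\sum_{i=1}^{s} \e(G[X_i]) \;\geq\; (1-\eps)\, s \cdot \e(G),
\]
so by averaging, there exists some index $i^\star$ with $\e(G[X_{i^\star}]) \geq (1-\eps)\e(G)$.

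I would then set $F := E(G) \setminus E(G[X_{i^\star}])$, so $|F| \leq \eps\,\e(G)$. The graph $G \setminus F$ has the same vertex set as $G$, but its edge set is precisely $E(G[X_{i^\star}])$; hence its connected components are the connected components of $G[X_{i^\star}]$ together with the vertices of $V(G) \setminus X_{i^\star}$ as isolated vertices. Since tree-depth is the maximum of the tree-depths of the connected components, and isolated vertices contribute tree-depth $1$, we conclude $\brm{td}(G \setminus F) \leq \max(d,1) = d$ (taking $d \ge 1$, which we may assume). Setting $d_{\ref{c:td}}(\mc{G},\eps) := d_{\ref{t:DS}}(\mc{G},\eps/2)$ completes the proof.

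I do not expect any real obstacle: the argument is a one-line averaging once the correct $\delta$ is chosen, and the only subtlety is noticing that deleting edges rather than vertices leaves behind isolated vertices whose presence does not increase the tree-depth.
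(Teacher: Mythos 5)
Your proof is correct and follows essentially the same route as the paper: apply \cref{t:DS} with $\delta=\eps/2$, note that each edge survives in $G[X_i]$ for all but at most $2\delta s=\eps s$ indices $i$, and average over $i$ to find a good index. The paper phrases the averaging in terms of $\sum_i|F_i|$ rather than $\sum_i \e(G[X_i])$ and does not spell out the (correct) remark about isolated vertices, but the argument is the same.
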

\begin{proof}
        Let $\delta= \frac{\eps}{2}$. We show that $d=d_{\ref{t:DS}}(\mc{G},\delta)$ satisfies the corollary.
	Indeed, for  $G \in \mc{G}$ let  $X_1,X_2, \ldots, X_s
        \subseteq V(G)$ be as in \cref{t:DS}. Let $F_i=E(G)\setminus
        E(G[X_i])$ for $i \in [s]$, then $\brm{td}(G \setminus F_i)
        \leq d$. Moreover, every edge belongs to at most $2 \delta s$
        sets $F_i$, so $$\frac{1}{s} \sum_{i=1}^{s}|F_i|\leq \frac1s
        \cdot 2 \delta s\cdot  \e(G) = \eps\, \e(G).$$ 
	Thus, by averaging, $|F_i| \leq \eps \,\e(G)$ for some $i$, and $F=F_i$ satisfies the corollary.	
      \end{proof}

     Note that the conclusion of Corollary~\ref{c:td} can be shown to
     hold in greater generality than in the context of minor-closed
     classes. For instance, any class in which all graphs can be made
     of bounded \emph{tree-width} by removing an arbitrarily small fraction of
     edges also have this property (see~\cite{DS20}). This includes all graphs of
     bounded \emph{layered tree-width} (see \cite{DMW17,Sha13}). Typical non minor-closed examples of such classes
     are families of graphs that can be embedded on a fixed surface,
     with a bounded number of crossings per edge~\cite{DMW19}. However, since
     the proof of Theorem~\ref{thm:CS} itself strongly relies on
     edge-contractions (and thus on the graph class $\Go$ being
     minor-closed), Theorem~\ref{thm:main} does not seem to be easily extendable
     beyond minor-closed classes.

        \section{Bounded size obstructions}\label{sec:ob}

\subsection{General properties}
        
A graph property $\Po$ is a graph class that is closed under
isomorphism. It will be convenient to write that $G\in \Po$ instead of
``$G$ satisfies $\Po$'' in the remainder of the paper. 
A graph $H$ is
\emph{minimally not in $\Po$} if $H\not\in \Po$ and any proper
subgraph of $H$ is in $\Po$.

\medskip

We will use the following result of Ne\v set\v ril and Ossona de
Mendez (Lemma 6.13 in \cite{NO12}) \footnote{The version we use here
  only needs $Q$ to be a singleton in the statement of Lemma 6.13 in
  \cite{NO12}.}.

\begin{lem}[\cite{NO12}]\label{lem:NO} For every integer $d\ge 1$ and every property $\Po$, there exists $N=N_{\ref{lem:NO}}(d,\Po)$ such
  that if $H$ is minimally not in $\Po$ and $\brm{td}(H) \leq d$ then $\ve(H) \leq N$.
\end{lem}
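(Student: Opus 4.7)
The plan is to derive Lemma~\ref{lem:NO} from the following well-quasi-ordering (WQO) statement: for every $d\ge 1$, the class $\mc{G}_d$ of graphs of tree-depth at most $d$ contains no infinite antichain under the subgraph relation. Given this, the lemma follows at once: any infinite sequence $H_1,H_2,\ldots$ of pairwise non-isomorphic graphs in $\mc{G}_d$, each minimally not in $\Po$, would be an antichain under subgraph containment, since $H_i \subsetneq H_j$ would force $H_i \in \Po$ by the minimality of $H_j$, contradicting $H_i \notin \Po$. Since any infinite family of graphs contains either infinitely many pairwise non-isomorphic members or arbitrarily large vertex counts, the absence of such antichains yields the required bound $N=N_{\ref{lem:NO}}(d,\Po)$.

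To establish the WQO, I would encode each $G \in \mc{G}_d$ as a rooted labeled forest $(F, L)$ of height at most $d$ as follows: pick any rooted forest $F$ of height $\le d$ with $G$ contained in the closure of $F$, and at each vertex $v$ at depth $\delta(v)$ attach the label $L(v) \subseteq \{1,\ldots,d\}$ consisting of the depths of those ancestors of $v$ in $F$ that are neighbors of $v$ in $G$. Labels thus live in a finite set $\Sigma = 2^{\{1,\ldots,d\}}$. Define $(F_1,L_1) \preceq (F_2,L_2)$ iff there is an injection $\phi \colon V(F_1) \to V(F_2)$ preserving the parent relation, mapping roots to roots, and satisfying $L_1(v) = L_2(\phi(v))$ for every $v$; since heights are bounded, $\phi$ then automatically preserves depth, so the labels of matched vertices refer to ancestors at the same depth. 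A straightforward induction on $d$, using Higman's lemma at each step (multisets over a WQO are WQO), shows that $\preceq$ is a WQO on such labeled forests: a rooted labeled tree of height $d$ is determined by the label of its root together with the multiset of its rooted labeled subtrees of height $\le d-1$, and a forest is a multiset of such trees.

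The final step is to check that $(F_1,L_1) \preceq (F_2,L_2)$ entails $G_1 \subseteq G_2$. This is essentially unpacking definitions: every edge of $G_1$ is of the form $\{u,v\}$ with $u$ an ancestor of $v$ in $F_1$ at some depth $i \in L_1(v)$, and by the three conditions defining $\preceq$ the pair $\{\phi(u),\phi(v)\}$ is an ancestor-descendant pair in $F_2$ at depth $i$ with $i \in L_2(\phi(v))$, hence an edge of $G_2$. Applied to the encodings of any hypothetical infinite antichain in $\mc{G}_d$, this yields an infinite antichain of labeled forests and contradicts the WQO.

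The main obstacle I anticipate is that the encoding $G \mapsto (F,L)$ is not canonical, since $\brm{td}(G)\le d$ can be witnessed by many different rooted forests $F$. This is harmless here because it suffices to pick \emph{some} encoding per graph in the hypothetical infinite sequence — any resulting $\preceq$-comparison produces a subgraph containment and hence a contradiction. The other technical point is isolating the precise variant of Higman/Kruskal used: because heights are bounded by $d$ and labels live in a fixed finite alphabet, the level-by-level induction above is enough and we do not need the full strength of Kruskal's tree theorem.
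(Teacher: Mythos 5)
Your argument is correct. Note, however, that the paper does not prove this lemma at all: it is quoted directly from Ne\v set\v ril and Ossona de Mendez \cite{NO12} (Lemma 6.13 there), so you have supplied a self-contained proof of a black-box ingredient. Your route --- encode each graph of tree-depth at most $d$ as a rooted forest of height at most $d$ whose vertices carry labels from the finite alphabet $2^{\{1,\dots,d\}}$ recording which ancestor-depths are adjacent, prove by induction on the height via Higman's lemma that these labelled forests are well-quasi-ordered under root- and parent-preserving label-respecting embeddings, and observe that such an embedding yields a subgraph containment --- is essentially the classical argument behind the cited result, and all the steps check out (in particular you correctly handle the non-canonicity of the encoding, and the passage from ``no bound $N$'' to an infinite antichain of pairwise non-isomorphic minimal obstructions is sound, since unbounded vertex counts force infinitely many isomorphism types). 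The instructive contrast is with the paper's Lemma~\ref{l:unsplit}, which treats the special case of unsplit graphs (covering minimal non-$k$-colorable graphs) by a direct pigeonhole on the isomorphism types of branches hanging off a root-to-vertex path: that argument is quantitative and yields an explicit tower-type bound $\twr(O(d))$, which the paper then uses to extract explicit query complexities in Section~\ref{sec:cl}, whereas your WQO argument (like the cited Lemma 6.13) is purely existential and gives no effective bound on $N$ --- a limitation the paper itself points out at the start of its Colorability subsection.
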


\subsection{Colorability}

The conclusion of Lemma~\ref{lem:NO} is quite strong but it does not give explicit bounds
on $N_{\ref{lem:NO}}(d,\Po)$. For completeness, we give such an explicit bound when $\Po$ is the property of being
$k$-colorable. The
specific question of whether 3-colorability of planar graphs was testable in the sparse model was raised by Christian Sohler at the Workshop on Local
Algorithms (WOLA) in 2021. A positive answer to this question directly
follows from Theorem~\ref{thm:main}, but the lemma below allows us to
give an explicit
bound on the query complexity of testing $k$-colorability in
minor-closed classes (see Section~\ref{sec:cl}).

\medskip

Given a graph $H$ and two vertex subsets $A,B\subseteq V(H)$, we say that
$(A,B)$ is a \emph{proper separation} of $H$ if $A\cup B=V(H)$,
$A\setminus B$ and  $B\setminus A$ are both non-empty, and there are
no edges between $A\setminus B$ and $B\setminus A$ in $H$.
We say that a graph $H$ is \emph{split} if there exists a proper separation
$(A,B)$ of $H$ and an isomorphism
$\phi: A \to B$ between $H[A]$ and $H[B]$ such that $\phi(v)=v$ for
every $v \in A \cap B$ (see Figure \ref{fig:split} for an example). Equivalently, a split graph can be obtained
  by taking two copies of some smaller graph and, for a proper subset
  of vertices, identifying the two copies of the vertex subset with
  each other.

  We say that a connected graph $H$ is \emph{unsplit} if it is not
split. Note that minimally non-$k$-colorable graphs are unsplit.

\begin{figure}[htb]
 \centering
 \includegraphics[scale=1.2]{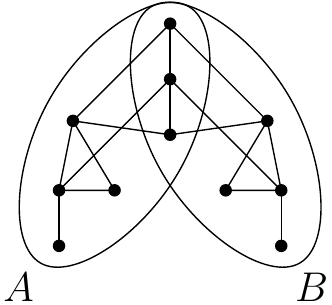}
 \caption{A split graph $H$ and the corresponding proper separation
   $(A,B)$ of $H$. The isomorphism $\phi:A\to B$ is the reflection symmetry
   with respect to the vertical axis.}
 \label{fig:split}
\end{figure}

\smallskip

The
\emph{tower function} is defined as $\twr(0)=0$ and
$\twr(i+1)=2^{\twr(i)}$ for any integer $i\ge 0$.

\begin{lem}\label{l:unsplit}
	For every integer $d\ge 1$, there exists $N=N_{\ref{l:unsplit}}(d)=\twr(O(d))$ such that if $H$ is an unsplit graph with $\brm{td}(H) \leq d$ then $\ve(H) \leq N$.
\end{lem}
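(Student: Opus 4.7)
My plan is to prove the lemma by induction on $d$, first strengthening the statement to handle ``markings.'' Namely, for every $d \geq 1$ and $k \geq 0$, there is a bound $N(d,k)$ such that every connected graph $H$ equipped with subsets $S_1, \ldots, S_k \subseteq V(H)$ with $\brm{td}(H) \leq d$ and admitting no proper separation $(A, B)$ with isomorphism $\phi \colon H[A] \to H[B]$ that fixes $A \cap B$ pointwise \emph{and} preserves each $S_i$-membership (i.e.\ $\phi(v) \in S_i \iff v \in S_i$), satisfies $\ve(H) \leq N(d, k)$. The lemma is the case $k = 0$. The base case $d = 1$ is trivial, since a connected graph of tree-depth $1$ is a single vertex.

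For the inductive step, I would pick a root $r$ of an optimal tree-depth decomposition of $H$ and let $H_1, \ldots, H_m$ be the connected components of $H - r$, each of tree-depth at most $d-1$. Each $H_i$ inherits the restrictions $S_j \cap V(H_i)$ and acquires one new marking $S^{(i)}_{k+1} := N_H(r) \cap V(H_i)$. Two verifications then drive the argument. First, each $(k+1)$-marked $H_i$ satisfies the strengthened hypothesis: any proper separation of $H_i$ with a marking-preserving, $A_i \cap B_i$-fixing isomorphism lifts to a forbidden separation of $H$ by taking the identity on $V(H) \setminus V(H_i)$, and edge preservation between $V(H_i)$ and $r$ is guaranteed precisely because the new marking $S^{(i)}_{k+1}$ records $N_H(r) \cap V(H_i)$. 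Second, no two components $H_i, H_j$ are isomorphic as $(k+1)$-marked graphs, as otherwise swapping them (via the given marked isomorphism on $V(H_i) \cup V(H_j)$ and the identity elsewhere) yields a split of $H$ preserving the original $k$ markings, contradicting the hypothesis on $H$.

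By induction, $\ve(H_i) \leq N(d-1, k+1)$, and the second claim bounds $m$ by the number of isomorphism classes of connected $(k+1)$-marked graphs of that size, which is at most $2^{O(N(d-1,k+1)^2 + k \cdot N(d-1,k+1))}$. This produces a recurrence of the form $N(d, k) \leq 2^{O(N(d-1,k+1)^2)}$ as long as $k = O(d)$; unrolling it $d-1$ times from $N(1, \cdot) = 1$ with $k$ growing by one per level (so $k$ stays $O(d)$) collapses to $N(d, 0) \leq \twr(O(d))$, as required. The main obstacle is the inductive strengthening: the components $H_i$ typically fail to be unsplit as plain graphs, so naive induction breaks down, and only by tracking one extra marking per level of the recursion---the one recording adjacency to the root just removed---do we retain enough rigidity for the component-swap argument to yield a split of the full graph.
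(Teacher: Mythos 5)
Your proof is correct and follows essentially the same route as the paper: both peel off the roots of a tree-depth decomposition and bound the branching at each level by counting isomorphism types of the child subtrees relative to the deleted ancestors, two identical types forcing a split. The paper encodes your ``markings'' by keeping the whole ancestor path $A_v$ inside the compared subgraphs $H[A_v \cup V(H_{u_j})]$ and inducts on the level within the single graph $H$ (so it never needs your Claim~1 about components inheriting a marked-unsplit property), but the counting argument and the resulting tower-type recurrence $k_i \le 2^{O(k_{i+1}^2)}$ are the same.
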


\begin{proof}
Define $k_d:=1$ and for any $0\le i\le d-1$, let
  $k_i:=(2^{{k_{i+1}\choose 2}+d \cdot k_{i+1}})k_{i+1}+1$. 

Choose a rooted tree $T$ of height at most $d+1$ rooted at some vertex $r$, such that $H$ is a subgraph of the closure of $T$.
 For each $v \in V(H)$,	let $T_v$ be the subtree of $T$ rooted at $v$ (consisting of $v$ and all its descendants), and let $H_v$ be the subgraph of $H$ induced by $V(T_v)$.
 Define the \emph{level} $\ell(v):= \brm{dist}_T(r,v) \in[0,d]$.
 
 We prove by induction on $d-i$ that $\ve(H_v) \leq k_i$ for every $v$
 with $\ell(v)=i$. The base case $i=d$ trivially holds, as $k_d=1$.
 
 For the induction step, let $u_1,\ldots,u_m$ be all the children of a
 vertex $v$ with $\ell(v)=i$, and let $A_v$ be the set consisting of
 $v$ and its ancestors. Then $|A_v|=i+1$. For each $j \in [m]$, let
 $H^{+}_j = H[A_v \cup V(H_{u_j})]$.
Note that there are at most $2^{{k_{i+1}\choose 2}+d \cdot
  k_{i+1}}$ distinct (labelled) graphs $G$ on at most $|A_v|+k_{i+1}\le d+k_{i+1}$
vertices such that the subgraph of $G$ induced by  the first $|A_v|$
vertices is
isomorphic to $H[A_v]$.  
 As $\ve(H_{u_j}) \leq k_{i+1}$, if
 $m>2^{{k_{i+1}\choose 2}+d \cdot k_{i+1}}$  there exist $j \neq j'$
 and an isomorphism $\phi: V(H^{+}_j) \to V(H^{+}_{j'})$ such that
 $\phi(w)=w$ for every $w \in A_v$. Such an isomorphism would imply
 that $H$ is split, and so $m\le 2^{{k_{i+1}\choose 2}+d \cdot
   k_{i+1}}$. It follows that $\ve(H_v) \leq mk_{i+1}+1\le (2^{{k_{i+1}\choose 2}+d
   \cdot k_{i+1}})k_{i+1}+1=k_i$, as desired.

 By taking $N:=k_0$ we obtain that $\ve(H)\le N$. It can be
  checked from the definition of $(k_i)_{0\le i \le d}$ that $N=k_0$ is at most a tower function of $O(d)$.
\end{proof}

\subsection{A linear Erd\H os-Pos\'a  property}

We use Corollary \ref{c:td} and Lemma \ref{lem:NO} to deduce the
following result, which is the main technical contribution of this paper.

\begin{thm}\label{thm:LEP}
	For  every proper minor-closed class $\mc{G}$, every $\eps >
        0$, and every property $\Po$, there exists $\delta
        > 0$ and an integer $N$ such that for every $G \in \mc{G}$ either
	\begin{itemize}
		\item there exists $F \subseteq E(G)$ with $|F| \leq
                  \eps\, \e(G)$ such that $G \setminus F$ is in $\Po$, or
		\item there exist edge-disjoint subgraphs
                  $G_1,\ldots, G_m$ of $G$ that are not in $\Po$, such
                  that $m \geq \delta \,\e(G)$ and for every $1\le i
                  \le m$, $\ve(G_i)\le N$.
 	\end{itemize}	
\end{thm}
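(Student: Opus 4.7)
The plan is to prove Theorem~\ref{thm:LEP} by a greedy peeling argument: iteratively extract small obstruction subgraphs and stop once what remains lies in $\Po$ after removing only a few more edges. Corollary~\ref{c:td} provides the tree-depth reduction needed to invoke Lemma~\ref{lem:NO}, which in turn bounds the size of minimal obstructions of bounded tree-depth.

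First I would fix constants. Let $d := d_{\ref{c:td}}(\mc{G},\eps/2)$, let $N := N_{\ref{lem:NO}}(d,\Po)$, and set $\delta := \eps / (2\binom{N}{2})$. Given $G \in \mc{G}$, I construct iteratively a sequence $H_0 \supseteq H_1 \supseteq \cdots$ of subgraphs of $G$ together with edge-disjoint subgraphs $G_1, G_2, \ldots$. Start with $H_0 := G$. At step $i \ge 0$, note that $H_i$ is a subgraph of $G$, hence $H_i \in \mc{G}$ since $\mc{G}$ is minor-closed; apply Corollary~\ref{c:td} to $H_i$ with parameter $\eps/2$ to obtain $F_i' \subseteq E(H_i)$ with $|F_i'| \le (\eps/2)\,\e(H_i) \le (\eps/2)\,\e(G)$ and $\brm{td}(H_i \setminus F_i') \le d$. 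If $H_i \setminus F_i' \in \Po$, halt. Otherwise choose $G_{i+1}$ to be a minimally-not-in-$\Po$ subgraph of $H_i \setminus F_i'$; since tree-depth is monotone under subgraphs, $\brm{td}(G_{i+1}) \le d$, and Lemma~\ref{lem:NO} gives $\ve(G_{i+1}) \le N$. Set $H_{i+1} := H_i \setminus E(G_{i+1})$ and continue. The procedure terminates because $\e(H_i)$ strictly decreases at each step.

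Let $m$ be the final value of $i$. We have $H_m \setminus F_m' \in \Po$, and $G_1,\ldots,G_m$ are edge-disjoint subgraphs of $G$ not in $\Po$, each with at most $\binom{N}{2}$ edges. If $m \ge \delta\,\e(G)$, the $G_i$'s directly witness the second alternative of the theorem. Otherwise $m < \delta\,\e(G)$, and so $\sum_{i=1}^{m} |E(G_i)| \le m\binom{N}{2} < (\eps/2)\,\e(G)$; setting $F := F_m' \cup \bigcup_{i=1}^{m} E(G_i)$ then yields $|F| \le \eps\,\e(G)$ and $G \setminus F = H_m \setminus F_m' \in \Po$, matching the first alternative.

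The only step requiring care is guaranteeing that whenever the algorithm fails to halt it can actually produce an obstruction on at most $N$ vertices; this is precisely where Corollary~\ref{c:td} (find a small edge set whose removal bounds tree-depth) and Lemma~\ref{lem:NO} (bounded tree-depth forces bounded-size minimal obstructions) must be combined, and it is the crux of the argument. Everything else is just bookkeeping: the arithmetic choice of $\delta$ is dictated by the requirement that the term $m\binom{N}{2}$ be absorbed into $(\eps/2)\,\e(G)$ in the first case.
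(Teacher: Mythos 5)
Your argument is correct and is essentially the paper's proof in a different packaging: both rest on combining Corollary~\ref{c:td} (delete $\tfrac{\eps}{2}\e(G)$ edges to reach bounded tree-depth) with Lemma~\ref{lem:NO} (bounded tree-depth forces bounded-size minimal obstructions), the paper via a maximal edge-disjoint collection followed by a single application of Corollary~\ref{c:td} and you via greedy peeling with a fresh application at each step, while your bound $\e(G_i)\le\binom{N}{2}$ simply replaces the paper's appeal to Mader's theorem ($\e(G_i)\le CN$) and only changes the value of $\delta$. The sole fragile point is your termination claim that $\e(H_i)$ strictly decreases, which could fail if a minimally-not-in-$\Po$ graph were edgeless; this degenerate possibility (which the paper's maximality argument also quietly ignores) does not occur for the properties the theorem is actually applied to.
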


\begin{proof}
By Theorem~\ref{thm:Mader}, there exists  $C$  such that $\e(G) \leq C
\ve(G)$ for every $G \in \mc{G}$.
Let
$d:=d_{\ref{c:td}}(\mc{G},\eps/2)$ and let $N:=N_{\ref{lem:NO}}(d,\Po)$. We show that $\delta:=\frac{\eps}{2NC}$ satisfies the theorem.

Let $G_1,\ldots, G_m$ be a maximal collection of edge-disjoint
subgraphs of $G$ that are not in $\Po$, and such that $\ve(G_i) \leq N$. If $m \geq \delta \e(G)$ the theorem holds, so we assume that $m < \delta \e(G)$.

Let $F' = \bigcup_{i=1}^m E(G_i)$. Then $$|F'| \leq C \sum_{i=1}^m \ve(G_i) \leq C m N < cN  \delta \e(G) \leq \frac{\eps}{2} \,\e(G).$$
Let $G' = G \setminus F'$.  By the choice of $d$, it follows from
Corollary~\ref{c:td} that  there exists $F'' \subseteq E(G')$ such that $|F''| \leq \frac{\eps}{2} \,\e(G)$ and $\brm{td}(G' \setminus F'') \leq d$. 

Let $G''=G' \setminus F''$. Suppose first that $G''$ is not in $\Po$,
and let $H$ be a minimal subgraph of $G''$ that is not in
$\Po$. As $H$ is minimally not in $\Po$, it follows from
Lemma~\ref{lem:NO} that $\ve(H) \leq N$. Thus adding $H$ to the collection $G_1,\ldots, G_m$ contradicts its maximality.

	It follows that $G''$ is in $\Po$, but $G'' = G \setminus F$, where $F = F' \cup F''$ and $|F| \leq  \eps\, \e(G)$, and so the theorem holds.
\end{proof}

\section{Property testing in the sparse model}\label{sec:proof}

\subsubsection*{The model} As alluded to in the introduction, we work
in the sparse model, only using queries to the random neighbor
oracle. That is, given an input graph $G$, the tester only does a
constant number of queries to the input, all of the following type:
given a vertex $v$, return a random neighbor of $v$ (uniformly at
random among all the neighbors of $v$ in $G$). The vertex $v$ itself
can be taken to be a random vertex of $G$, but does not need to. The
computation of a random vertex of $G$ and a random neighbor of a given
vertex of $G$ are assumed to take constant time in this model.

\medskip

We are now ready to prove Theorem~\ref{thm:main}.

\medskip

\noindent {\it Proof of Theorem \ref{thm:main}.}
Let $\Go$ be a
proper-minor class and $\Po$ be a monotone property. Let $\eps$ be
given. Let $\delta>0$ and $N$ be obtained by applying
Theorem~\ref{thm:LEP} to $\Go$, $\Po$ and $\eps$, and let $\Ho$ be the (finite) set of all graphs of at
most $N$ vertices that are not in $\Po$. We now run the tester of
Theorem~\ref{thm:CS} for testing whether a graph $G\in \Go$ is
$\Ho$-free or $\delta$-far from being $\Ho$-free.

Assume first that $G\in \Po$. If $G$ contains a graph $H\in
\Ho$ as a subgraph, then since $\Po$ is monotone, we have $H\in \Po$, which is
a contradiction. Hence, $G$ is $\Ho$-free, and it follows that  the one-sided tester of Theorem~\ref{thm:CS} accepts $G$ with probability 1. Assume now that $G$ is $\eps$-far from $\Po$. By
Theorem~\ref{thm:LEP}, there exist at least $\delta \e(G)$
edge-disjoint subgraphs of $G$ that are all in $\Ho$, and
thus one needs to remove at least one edge in each of these $\delta \e(G)$
edge-disjoint subgraphs to obtain an
$\Ho$-free graph. As $\Po$ is monotone, $G$ is $\delta$-far from being
$\Ho$-free, and it follows that the tester of 
Theorem~\ref{thm:CS} rejects $G$ with probability at
least $\tfrac23$, as desired.
This concludes the proof of Theorem~\ref{thm:main}.\hfill $\Box$

\section{Local Certification}\label{sec:lcert}

We recall that in this part, all graphs are assumed to be
connected.

\subsection{Summable properties}

Before we prove Theorem \ref{thm:lcertsumasdim}, we will need the
following consequence of a result of Brodskiy, Dydak, Levin and Mitra~\cite{BDLM}
(obtained by taking $r=2$ in Theorem 2.4 in their paper). This can be
seen as an analogue of Theorem~\ref{t:DS} where tree-depth is replaced
by the weaker notion of weak diameter, while proper minor-closed
classes are replaced by the more general classes of bounded asymptotic dimension.

\begin{thm}[\cite{BDLM}]\label{t:asdim}
Let $\Go$ be a class of graphs of bounded asymptotic dimension and let
$\delta>0$ be a real number. Then there exist two constants
$D=D_{\ref{t:asdim}}({\Go,\delta})\in \bb{N}$ and $s=s_{\ref{t:asdim}}({\Go,\delta})\in \bb{N}$ satisfying the following. For every $G \in \mc{G}$ there exist $X_1,X_2, \ldots, X_s \subseteq V(G)$ such that  
	\begin{itemize}
		\item for any $1\le i \le s$, each connected component of
                  $G[X_i]$ has weak diameter at most $D$ in $G$, and
		\item every $v \in V(G)$ belongs to at least $(1 -
                  \delta)s$ of the sets $X_i$.
                \end{itemize} 
\end{thm}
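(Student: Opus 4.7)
The theorem is obtained by specializing Theorem~2.4 of \cite{BDLM} to the parameter $r = 2$; my plan is to outline how their coarse-geometric strategy applies in the graph setting equipped with the shortest-path metric.

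Let $d$ be a bound on the asymptotic dimension of $\Go$ and let $D(\cdot)$ be the associated control function. First I would apply the definition of bounded asymptotic dimension to $G \in \Go$ with a sufficiently large parameter $R$ (to be chosen in terms of $d$ and $\delta$), obtaining a $(d+1)$-coloring $c:V(G) \to \{1,\ldots,d+1\}$ such that every monochromatic component of $G^R$ has weak diameter at most $D(R)$ in $G$. In particular, any two distinct $R$-components of a single color class lie at mutual distance strictly greater than $R$ in $G$, so the $R$-components behave like a well-separated family of bounded-weak-diameter clusters.

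A single $(d+1)$-coloring only places each vertex in one color class, giving coverage $\tfrac{1}{d+1}$ rather than $1-\delta$. The core of the BDLM argument is an amplification step that lifts this to multiplicity $(1-\delta)s$: the asymptotic dimension definition is applied at $s$ suitably spaced scales $R_1 < R_2 < \cdots < R_s$, and the sets $X_i$ are built as unions of (possibly fattened) color classes from the $i$-th coloring, combined with coarser-scale data to prevent merging of clusters across scales. Each $X_i$ has bounded-weak-diameter components inherited from the $R_i$-component diameters, and, by a counting argument at each vertex, each $v\in V(G)$ fails to lie in $X_i$ for only a bounded number of scales (bounded as a function of $d$). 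Choosing $s$ of order $(d+1)/\delta$ then yields the required multiplicity condition.

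\textbf{Main obstacle.} The principal difficulty is to select the scales $R_i$ and the precise construction of each $X_i$ from the level-$i$ coloring so that the weak diameters of components of $G[X_i]$ remain uniformly bounded \emph{independently} of $s$. A naive union of color classes across scales can easily produce components of unbounded weak diameter, so one needs a delicate balance: the $R_i$ must grow fast enough (roughly so that $R_{i+1}$ dominates $D(R_i)$) for successive refinements to be well separated, yet the construction must be designed so that each component of $X_i$ is ultimately contained in the fattening of a single $R_i$-component. BDLM carry this out by an inductive construction on the scales from coarsest to finest, and the only adaptation needed in our setting is to interpret their ambient metric as the shortest-path metric of $G$ and the weak-diameter condition in $G$ accordingly; since the asymptotic dimension definition already delivers weak-diameter bounds in $G$ (rather than diameter bounds in $G^R$), the translation is immediate.
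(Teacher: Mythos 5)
The paper gives no proof of this statement: it is used as a black box, cited verbatim as the case $r=2$ of Theorem~2.4 of \cite{BDLM}, which is exactly the reduction you open with, so your proposal takes the same route as the paper. Your further sketch of the internal multi-scale amplification (colorings at $s$ well-separated scales $R_1<\cdots<R_s$ with $R_{i+1}$ dominating $D(R_i)$, plus a per-vertex counting argument giving multiplicity $s-d\ge(1-\delta)s$ for $s\approx d/\delta$) is a faithful outline of the standard argument, but it is extra material beyond what the paper verifies and is left at the level of a plan, with the key union-across-scales step acknowledged rather than carried out.
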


 It can be noted that if a subset $S$ of
vertices of a graph $G$ is such that $G[S]$ has
bounded tree-depth, then $G[S]$ has bounded diameter~\cite{NO12},
and thus $S$ has bounded weak diameter in $G$. It follows that in the
special case of
proper minor-closed classes,
Theorem~\ref{t:DS} implies Theorem \ref{t:asdim} in a strong form.

\medskip

\noindent{\it Proof of Theorem~\ref{thm:lcertsumasdim}.}
Fix any real number $\eps>0$ and an integer $d\ge 0$. Let $\Go$ be a
class of bounded asymptotic dimension, and let $\Po$ be a monotone summable property of $\Go$. Let
$\delta=\tfrac{\eps}2$. By
Theorem~\ref{t:asdim}, there exist two constants
$D=D_{\ref{t:asdim}}({\Go,\delta})\in \bb{N}$ and $s=s_{\ref{t:asdim}}({\Go,\delta})\in \bb{N}$ satisfying the following. For every $G \in \mc{G}$ there exist $X_1,X_2, \ldots, X_s \subseteq V(G)$ such that  
	\begin{itemize}
		 \item for any $1\le i \le s$, each component of
                  $G[X_i]$ has weak diameter at most $D$ in $G$, and
		\item every $v \in V(G)$ belongs to at least $(1 -
                  \delta)s$ of the sets $X_i$.
                \end{itemize}

For any $v\in V(G)$, we define the proof $P(v)$ as (a binary
representation of) the set of indices $I(v)\subseteq\{1,2,\ldots,s\}$ such that $v\in X_i$. This proof has constant size
(depending only of $\Po$ and $\eps$).

For every vertex $v$, the local verifier $\mathcal{A}(G,P,v)$ first checks
that $I(v)$ contains at least $(1-\delta)s$ integers from
$\{1,2,\ldots,s\}$. If this is not the case, then $v$ rejects the
instance.
In the remainder, we call a \emph{monochromatic component of
  color $i$} a maximal connected subset of vertices $v$
of $G$ such that $i\in I(v)$. We omit the color if it is irrelevant in
the discussion. Note that each vertex $v$ belongs to $|I(v)|$
monochromatic components. For each vertex $v$, $\mathcal{A}(G,P,v)$ checks that the subgraph of $G$ induced by
the vertices $u\in B_{r}(v)$ is in $\Po$, for $r=2D+1$, and that all monochromatic
components of $G$ containing $v$ have weak diameter at most $D$ (this can
be clearly done as $v$ has access to the subgraph of $G$ induced by  its ball of
radius $r=2D+1$). If this is the case, then $v$ accepts
the instance, and otherwise $v$ rejects the instance.

\medskip

It follows from
the definition of our scheme and the monotonicity of $\Po$ that for any $G\in \Po$, the local
verifier $\mathcal{A}(G,P,v)$ of each vertex $v$ of
$G$ accepts the instance.
Consider now a graph $G$ and a proof $P'$ such that for each vertex
$v$, $\mathcal{A}(G,P',v)=1$. The proof $P'$
assigns a subset $I(v)$ of indices of $\{1,\ldots,s\}$ to each vertex
$v$ of $G$, such that $|I(v)|\ge (1-\delta)s$. For each $1\le i \le s$, let
$X_i$ be the subset of vertices $v$ of $G$ such that $i\in I(v)$. Let $C$ be
a connected component of some  $G[X_i]$ (that is, $C$ is a
monochromatic component of color $i$), for some $1\le i \le
s$. Since all vertices of $C$ accept the instance, $C$ has weak
diameter at most $D$ in $G$. It follows that $C$ is contained in some ball of
radius $D$ in $G$, and thus (since $\Po$ is monotone, and each ball of
radius $r=2D+1\ge D$ induces a graph of $\Po$), $C$ lies in $\Po$. As $\Po$ is
summable, $G[X_i]$ also lies in $\Po$.

It follows from the proof of  Corollary
                \ref{c:td}, that if we set $F_i=E(G)\setminus
                E(G[X_i])$ for any $1\le i \le s$, then the property
                that every vertex $v \in V(G)$ belongs to at least $(1 -
                  \delta)s$ of sets $X_i$ implies that there is an
                  index $1\le i \le s$ such that $|F_i|\le \eps\,
                  \e(G)$. By the paragraph above $G\setminus F_i$
                  satisfies $\Po$, and thus $G$ is $\eps$-close from
                  $\Po$ (we say that a graph is \emph{$\eps$-close} from
                  $\Po$ if it is not $\eps$-far from $\Po$).  In the contrapositive, we have proved that if $G$ is $\eps$-far from
  $\Po$, then there is at least one vertex $v$ such that
  $\mathcal{A}(G,P',v)=0$, as desired.
  \hfill$\Box$

  \medskip

Using the fact that proper minor-closed classes have asymptotic
dimension at most 2~\cite{asdim}, we immediately obtain 
Theorem~\ref{thm:lcertsum} as a corollary. Note that for the same purpose we could also use an  earlier (and
simpler)  result of
Ostrovskii and Rosenthal~\cite{OR15}, who proved that for every
integer $t$, the class of $K_t$-minor free graphs has asymptotic
dimension at most $4^t$. We could also use Theorem~\ref{t:DS} without
any reference to asymptotic dimension,
as Theorem~\ref{t:DS} implies Theorem \ref{t:asdim} for proper minor-closed
classes (see the discussion before the proof of Theorem~\ref{thm:lcertsumasdim}).

\subsection{Non necessarily summable properties}

We now consider proof labelling schemes of complexity $O(\log n)$,
rather than $O(1)$. To prove Theorem~\ref{thm:lcert}, we will need the following recent result of Bousquet, Feuilloley and
Pierron~\cite{BFT2}.

\begin{thm}[\cite{BFT2}]\label{thm:bft}
For every integer $d\ge 1$ and every first-order sentence $\varphi$, the
class of graphs of tree-depth at most $d$ satisfying $\varphi$ has a
1-round proof labelling scheme of complexity $O(\log n)$.
\end{thm}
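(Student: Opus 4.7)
The plan is to certify an elimination tree of height at most $d$ witnessing $\brm{td}(G)\le d$, and then to certify the first-order theory of $G$ via a bottom-up Feferman--Vaught style type computation on this tree.

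\emph{Step 1 (Tree-depth certification).} Fix a \emph{standard} elimination tree $T$ for $G$ of height at most $d$, built by the classical recursion: pick a root $r$, remove it from $G$, and recurse on each connected component of $G-r$, whose chosen root becomes a tree-child of $r$. Then $G$ is a subgraph of the closure of $T$, and crucially, for every vertex $v$ and every tree-child $c$ of $v$, the subtree of $T$ rooted at $c$ contains at least one graph-neighbor of $v$ in $G$ (the subtree is a component of the graph at the time $v$ was removed, and was connected to $v$ before). The prover labels each $v$ with its ancestor chain $(a_1,\ldots,a_{\ell(v)}=v)$, encoded as at most $d$ identifiers of $O(\log n)$ bits, using $O(\log n)$ bits in total. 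The 1-round verifier at $v$ checks that $\ell(v)\le d$ and that for every graph-neighbor $u$ the chains of $u$ and $v$ are prefix-comparable and agree identifier-by-identifier on the common prefix. A standard argument using the connectivity of $G$ shows these local checks force the chains to describe one global rooted tree $T$ whose closure contains $G$.

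\emph{Step 2 (FO types).} Let $q$ be the quantifier rank of $\varphi$. A Feferman--Vaught-style composition theorem for rooted tree decompositions supplies a finite set $\Theta=\Theta(d,q)$ of rank-$q$ types and an assignment $\tau(v)\in\Theta$ to each $v\in V(T)$ encoding the rank-$q$ theory of the subgraph of $G$ induced by $v$'s ancestors, $v$, and $v$'s descendants in $T$ (with the ancestors marked as distinguished constants). The value $\tau(v)$ is a fixed finite function $\brm{Compose}$ of (i) the multiset $\{\tau(c):c\text{ tree-child of }v\}$ and (ii) the set of indices $j<\ell(v)$ with $a_jv\in E(G)$. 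Furthermore $G\models\varphi$ iff $\tau(r)\in\Theta_\varphi$ for the unique root $r$ and a fixed subset $\Theta_\varphi\subseteq\Theta$. The prover extends each $v$'s label by the tuple $(\tau(a_1),\ldots,\tau(a_{\ell(v)}))$ of types along its ancestor chain; since $|\Theta|=O(1)$ and $\ell(v)\le d$, this adds only $O(1)$ bits.

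\emph{Step 3 (Verification and the main obstacle).} The 1-round verifier at $v$ additionally checks: (a) the ancestor-type tuples of $v$ and any graph-neighbor $u$ agree on their common prefix; (b) if $\ell(v)=1$ then $\tau(v)\in\Theta_\varphi$; and (c) $\tau(v)=\brm{Compose}(\mathcal{M},A)$, where $A$ is the set of indices $j<\ell(v)$ with $a_j$ appearing among $v$'s graph-neighbors, and $\mathcal{M}$ is reconstructed as follows: for each graph-neighbor $u$ of $v$ that is a descendant of $v$, $u$'s label yields the pair $(c_u,\tau(c_u))$ with $c_u$ the $(\ell(v)+1)$-th entry of $u$'s chain; grouping these pairs by distinct identifiers $c_u$ gives exactly the set of $v$'s tree-children together with their types, hence the desired multiset $\mathcal{M}$. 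The main obstacle is that a tree-child $c$ of $v$ need not be graph-adjacent to $v$, so $v$ cannot read $\tau(c)$ directly from a 1-round view; this is overcome by the standard elimination tree property from Step 1, which ensures at least one graph-neighbor of $v$ inside every tree-child's subtree. Completeness is immediate from the honest labelling. Soundness follows by induction on subtree height: the base case (leaves) is a direct local check of $\brm{Compose}$, and the inductive step pins down $\tau(v)$ from the true multiset of its children's types, so the root check (b) certifies $G\models\varphi$. The total label size is $O(d\log n)+O(1)=O(\log n)$.
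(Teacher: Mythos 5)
This theorem is imported from \cite{BFT2} and is not proved in the paper, so there is no internal proof to compare against; I can only assess your argument on its own terms. Your Step 1 is fine as far as certifying $\brm{td}(G)\le d$ goes: the level function read off from accepted chains is a centered colouring, which is all that part needs. The genuine gap is in the soundness of Steps 2--3. The property you invoke to overcome what you call the main obstacle --- that every tree-child's subtree of $v$ contains a graph-neighbour of $v$ --- is a property of the \emph{honest} ``standard'' elimination tree. It is available for completeness but not for soundness: a cheating prover may supply chains describing a different (even perfectly valid) elimination tree in which some child subtree $S$ of $v$ contains no neighbour of $v$. Your prefix checks only force every edge leaving such an $S$ to go to a \emph{proper ancestor} $a_j$ of $v$, and your agreement check between $a_j$ and a vertex $u\in S$ only covers the first $j<\ell(v)$ entries of the type tuples. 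Hence the entries $\tau(v),\tau(c),\dots$ reported inside $S$ are never cross-checked against the rest of the graph, $v$ performs its $\mathrm{Compose}$ check with a multiset $\mathcal{M}$ that silently omits $S$, and the type that propagates to the root can miss an entire piece of $G$.

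Concretely, take $d=4$, let $\varphi$ be ``triangle-free'', and let $G$ have vertices $p,v,u_1,u_2$ and edges $pv,pu_1,pu_2,u_1u_2$ (a triangle with a pendant vertex $v$; $\brm{td}(G)=3$). Assign the chains $(p)$, $(p,v)$, $(p,v,u_1)$, $(p,v,u_1,u_2)$. This is a genuine elimination tree (the path $p,v,u_1,u_2$, whose closure is $K_4$), so all prefix checks pass. Since $v$ is adjacent only to $p$, its verifier computes $\mathcal{M}=\emptyset$ and happily certifies $\tau(v)=\mathrm{Compose}(\emptyset,\{1\})$, the type of a pendant with no descendants; $u_1$ and $u_2$ store this same false value of $\tau(v)$ in position $2$ of their tuples, which is consistent with everything $p$ can see (all three descendants report the child $v$ with the same type), while their own $\mathrm{Compose}$ checks at positions $3$ and $4$ are honest and pass. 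The root $p$ then computes the type of a two-vertex path and accepts, although $G$ contains a triangle; every check in your verifier succeeds. (A second, smaller hole: if the prover places every vertex at level at least $2$, your check (b) never fires.) Repairing this requires a mechanism by which $v$ learns the types of \emph{all} of its children, including the invisible ones --- for instance storing along each ancestor chain suitably truncated multisets of children's types and cross-checking them between neighbours --- and arranging for that information to be verifiable is precisely the nontrivial content of the proof in \cite{BFT2}.
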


Although we will not need it, it is worth noting that the complexity in their result is of order
$O(d\log n+ f(d,\varphi))$. Observe also that checking whether a graph is $\Ho$-free for
some fixed finite family $\Ho$ can be expressed by a first-order
formula. In particular, it follows directly from Lemma~\ref{lem:NO}
that checking whether $\brm{td}(G)\le d$ can be expressed by a
first-order formula (and thus certified with local horizon 1 with labels of size
$O(\log n)$ per vertex).

\medskip

The final ingredient that we will need is the ability to certify a
rooted spanning tree, together with the children/parent relationship
in this tree, with certificates of $O(\log n)$ bits
per vertex (see~\cite{AKY97,APV91,IL94} for the origins of this
classical scheme). The prover gives  the identifier $\mathrm{id}(r)$ of the
root $r$ of $T$ to each vertex $v$ of $G$, as well as $d_T(v,r)$, its
distance to $r$ in $T$, and each vertex $v$ distinct from the root is
also given the identifier of its parent $p(v)$ in $T$. The local
verifier at $v$ starts by checking  that $v$ agrees with all its
neighbors in $G$ on the identity of the root $r$ of $T$. If so, if
$v\ne r$, $v$ checks that $d_T(v,r)=d_T(p(v),r)+1$. It can be checked
that all vertices accept the instance if and only $T$ is a rooted
spanning tree of $G$. Moreover, once the rooted spanning tree $T$ has been certified, each vertex of
$G$ knows its parent and children (if any) in $T$.

\medskip

We are now ready to prove Theorem~\ref{thm:lcert}.

\medskip

\noindent{\it Proof of Theorem~\ref{thm:lcert}.}
The beginning of the proof proceeds exactly as in the proof of Theorem~\ref{thm:lcertsum}.
Fix any real number $\eps>0$. Let $\Go$ be a proper minor-closed
class, and let $\Po$ be a monotone (not necessarily summable) property of $\Go$. Let
$\delta=\tfrac{\eps}2$. By
Theorem~\ref{t:DS}, there exist $d=d_{\ref{t:DS}}(\mc{G},\delta) \in \bb{N}$
        and $s=s_{\ref{t:DS}}(\mc{G},\delta) \in \bb{N}$ satisfying the following. For every $G \in \mc{G}$ there exist $X_1,X_2, \ldots, X_s \subseteq V(G)$ such that  
	\begin{itemize}
		\item for any $1\le i \le s$, $\brm{td}(G[X_i]) \leq d$, and
		\item every $v \in V(G)$ belongs to at least $(1 -
                  \delta)s$ of the sets $X_i$.
                \end{itemize}
                By Lemma~\ref{lem:NO}, there exists a
constant $N=N_{\ref{lem:NO}}(d,\Po)$ such
  that if $H$ is minimally not in $\Po$ and $\brm{td}(H) \leq d$ then
  $\ve(H) \leq N$. Let $\Ho$ be the (finite) set of all graphs of at
  most $N$ vertices that are not in $\Po$.

  \medskip

For any $v\in V(G)$, the proof $P(v)$ contains (a binary
representation of) the set of indices
$I(v)\subseteq \{1,\ldots,s\}$ such that $v\in X_i$. This part of the proof has constant size
(depending only of $\Po$ and $\eps$). As in the proof of
Theorem~\ref{thm:lcertsum}, the local verifier at each vertex $v$
checks that $|I(v)|\ge (1 -
                  \delta)s$, and rejects the instance if this does not
                  hold.

                  \medskip

For each $1\le i \le s$, we do the following. In each
connected component $C$ of $G[X_i]$, we consider a rooted spanning tree
$T_C$ of $C$, with root $r_C$, and certify it using certificates of
$O(\log n)$ bits per vertex. It follows from Theorem~\ref{thm:bft}
that any first-order property of $G[C]$ can be 
certified with certificates of size $O(\log n)$ bits per vertex (as
all the components $C$ are vertex-disjoint, combining all these
certificates and schemes still results in a scheme with labels of
$O(\log n)$ bits per vertex). In particular we can certify that $\brm{td}(G[C])\le d$ (this is a first-order
property). Let $\Ho'$ be the class of all (non-empty) graphs obtained from a graph
$H\in \Ho$ by deleting an arbitrary subset of connected components of
$H$ (note that if all the graphs of $\Ho$ are connected,
$\Ho=\Ho'$). Observe that all the graphs of $\Ho'$ have size at most
$N$ (which is a constant independent of the size of $G$).
Then, for any $H'\in \Ho'$,  we certify that $G[C]$ is $H'$-free or
contains a copy of $H'$ using Theorem~\ref{thm:bft},
and store this information at the vertex $r_C$ in a constant-size binary array $b(r_C)$, whose entries
are indexed by all the graphs of $\Ho'$ (where the entry of $b(r_C)$
corresponding to some $H'\in \Ho'$ is equal to 1 if and only if $C$
contains a copy of $H'$ as a subgraph).

\smallskip

It remains to aggregate this information along some rooted spanning tree
$T$ of $G$ (which can itself be certified with certificates of $O(\log
n)$ bits per vertex). We do this as follows, for every $1\le i \le s$. For a vertex $v$
of the rooted tree $T$, the subtree of $T$ rooted in $v$ is denoted by
$T_v$. For each vertex
$v$ of $G$, let $\mathcal{C}_v$ be  the set of  components $C$ of
$G[X_i]$ such
that $r_C$ lies in $T_v$. Then the proof $P(v)$ contains a
binary array $c(v)$, whose entries are indexed by the graphs $H'$ of $\Ho'$. The array
$c(v)$ is defined as follows: for any $H'\in\Ho'$, the entry of $c(v)$
corresponding to $H'$ is equal to 1 if and only if $H'$ is a disjoint
union of (non necessarily connected) graphs $H_1',H_2',\ldots,H_k'\in \Ho'$ such that each $H_i'$
appears in a different component of $\mathcal{C}_v$. 
The consistency of the binary arrays $c(v)$ is verified locally as
follows. For each vertex $v$ of $G$, the local verifier at $v$
considers the binary arrays $c(u)$, for all children $u$ of $v$ (and
the binary array $b(v)$, if $v$ is equal to some root $r_C$). For any
$H'\in\Ho'$, the local verifier at $v$ checks whether $H'$ can be
written as a disjoint union of graphs $H_1',H_2',\ldots,H_k'\in \Ho'$ such that each $H_i'$
appears in a different array among the children of $v$ (plus in
$b(v)$, if $v$ is a root of some component $C$). The local verifier at
$v$ then checks whether this is consistent with the entry
corresponding to $H'$ in $c(v)$. Clearly, all the vertices accept if
and only if the information is consistent along the spanning tree, and
it follows that the local verifier at the root $r$ can check for each $H\in
\Ho\subseteq \Ho'$, whether the entry of $c(r)$ corresponding to $H$
is equal to 0 or 1. It follows that the local verifier at $r$ can
check whether $G[X_i]$ is $\Ho$-free (and accept the instance if
and only if this is the case).

\medskip

It follows from
the definition of our scheme that for any $G\in \Po$, the local
verifier of each vertex of
$G$ accepts the instance.

\medskip

Consider now some graph $G$ together with some proof $P'$ such that
the local verifier $\mathcal{A}(G,P',v)$ at each vertex $v$ of $G$ accepts
the instance. For any $1\le i \le s$, let $X_i$ be the set of vertices
$v$ such that $i\in I(v)$ (where $I(v)$ is given by the proof $P'(v)$), and let $F_i=E(G)\setminus
                E(G[X_i])$. As in the proof of Theorem~\ref{thm:lcertsum}, the property
                that every vertex $v \in V(G)$ belongs to at least $(1 -
                  \delta)s$ of sets $X_i$ implies that there is an
                  index $1\le i \le s$ such that $|F_i|\le \eps
                  \e(G)$.

By the properties of the local certificates, each component of $G\setminus F_i=G[X_i]$ has tree-depth at most $d$, and thus $G\setminus F_i$ has
tree-depth at most $d$. Moreover, our local certificates imply that
$G\setminus F_i$ is $\Ho$-free. Since 
$\Po$ is monotone, $G\setminus F_i$ is in $\Po$. It follows that $G$ is $\eps$-close from $\Po$. Taking the
contrapositive, this shows
that if a graph is $\eps$-far  from $\Po$, then at least one local
verifier will reject the instance. This concludes the proof of Theorem~\ref{thm:lcert}.
\hfill $\Box$

\section{Conclusion}\label{sec:cl}

In this paper we proved that for any proper minor-closed class $\Go$,
using constantly many queries to the
random neighbor oracle, a tester can decide with good probability whether
an input graph $G\in \Go$
satisfies some fixed monotone property $\Po$, or is $\eps$-far
from $\Po$. Given the level of generality of the result it is to be expected
that no explicit bounds on the query complexity are given. However, we can
give explicit estimates on the query complexity for specific
properties. For instance, it follows from the bounds of
\cite[Corollary 35]{DS20}, combined with Lemma~\ref{l:unsplit} and our
proof of Theorem~\ref{thm:main}, that 3-colorability can be tested in
planar graphs with $\twr(\text{poly}(1/\eps))$
queries to the random neighbor oracle. This can be extended to testing
$k$-colorability in $K_t$-minor free graphs, for any $k$ and $t$, at
the expense of a significant increase in the height of the tower
function, by combining the results of \cite{DS20} with the main result
of \cite{DDOSRS} (the bounds there are not explicit as a function of
$t$, but can be made explicit using results from the Graph Minor series).
This is to be compared with the main result of \cite{CMOS19}, that
2-colorability can be tested with $2^{2^{\text{poly}(1/\eps)}}$
queries in planar graphs.  It is a natural problem to understand
whether these properties can be tested with $\text{poly}(1/\eps)$
queries to the random neighbor oracle, and more generally to develop
techniques for proving finer lower bounds on the query complexity of
monotone properties in this model (see~\cite{BKS22} for recent results
in this direction in the bounded degree model).

\medskip

\subsubsection*{Acknowledgements} This work was initiated during the Graph
Theory workshop in 
Oberwolfach, Germany, in January 2022. The authors would like to thank the
organizers and participants for all the discussions and nice
atmosphere (and in particular Gwena\"el Joret, Chun-Hung Liu, and Ken-ichi
Kawarabayashi for the discussions related to the topic of this
paper). The authors would also like to thank G\'abor Elek for his
remarks on an earlier version of this manuscript, and his suggestion
to replace minor-closed classes by classes of bounded asymptotic
dimension in Theorem~\ref{thm:lcertsum}.


\end{document}